\documentclass[a4paper, 12pt]{article}
\usepackage{amsmath,amssymb,amsfonts,amsthm, mathtools}
\usepackage{fullpage, relsize, url, hyperref}
\usepackage{longtable, threeparttable, booktabs, graphicx, float, rotating, tikz, multirow}
\usepackage{times, txfonts, color}
\usepackage{titlesec}
\usepackage{algorithmic}
\usepackage[linesnumbered,ruled]{algorithm2e}
\usepackage{subfigure}
\usepackage{mathtools}
\usepackage{mathrsfs}
\usepackage{natbib} 
\bibliographystyle{Chicago}

\usepackage{array}
\usepackage{arydshln}
\setlength\dashlinedash{0.2pt}
\setlength\dashlinegap{1.5pt}
\setlength\arrayrulewidth{0.3pt}

\newtheorem{theorem}{Theorem}

\newtheorem{prop}[theorem]{Proposition}

\theoremstyle{definition}
\newtheorem{definition}{Definition}
\newtheorem{remark}{Remark}

\addtolength{\oddsidemargin}{-.5in}
\addtolength{\evensidemargin}{-.5in}
\addtolength{\textwidth}{1in}
\addtolength{\textheight}{1.2in}
\addtolength{\topmargin}{-.5in}
\protect


\DeclareMathOperator*{\argmax}{arg\,max}
\DeclareMathOperator*{\argmin}{arg\,min}
\newcommand{\minop}{\operatorname{\triangle}}
\newcommand{\p}{\operatorname{P}\,}
\newcommand{\E}{\operatorname{E}\,}
\newcommand{\Var}{\operatorname{Var}\,}

\begin{document}

 \title{\bf Modified Galton-Watson processes with immigration under an alternative offspring mechanism}
 \author{Wagner Barreto-Souza\footnote{Email: \href{mailto:wagner.barretosouza@kaust.edu.sa}{wagner.barretosouza@kaust.edu.sa} (Corresponding Author)},\,\, Sokol Ndreca$^\star$\footnote{Email: \href{mailto:sokol@est.ufmg.br}{sokol@est.ufmg.br}},\,\, 
 Rodrigo B. Silva$^\P$\footnote{Email: \href{mailto:rodrigo@de.ufpb.br}{rodrigo@de.ufpb.br}}\,\, and 
 Roger W.C. Silva$^\star$\footnote{Email: \href{mailto:rogerwcs@est.ufmg.br}{rogerwcs@est.ufmg.br}} \\\\
 	\small $^*$\it Statistics Program, King Abdullah University of Science and Technology, Thuwal, Saudi Arabia\\
 	\small $^\star$\it Departamento de Estat\' \i stica, Universidade Federal de Minas Gerais, Belo Horizonte, Brazil\\
 	\small $^\P$\it Departamento de Estat\' \i stica, Universidade Federal da Para\'iba, Jo\~ao Pessoa, Brazil}
 \maketitle

\begin{abstract}
We propose a novel class of count time series models alternative to the classic Galton-Watson process with immigration (GWI) and Bernoulli offspring. A new offspring mechanism is developed and its properties are explored. This novel mechanism, called geometric thinning operator, is used to define a class of modified GWI (MGWI) processes, which induces a certain non-linearity to the models. We show that this non-linearity can produce better results in terms of prediction when compared to the linear case commonly considered in the literature. We explore both stationary and non-stationary versions of our MGWI processes. Inference on the model parameters is addressed and the finite-sample behavior of the estimators investigated through Monte Carlo simulations. Two real data sets are analyzed to illustrate the stationary and non-stationary cases and the gain of the non-linearity induced for our method over the existing linear methods. A generalization of the geometric thinning operator and an associated MGWI process are also proposed and motivated for dealing with zero-inflated or zero-deflated count time series data.
\end{abstract}
{\it \textbf{Keywords}:} Autocorrelation; Count time series; Estimation; INAR processes; Galton-Watson processes; Geometric thinning operator.



\section{Introduction}\label{sec:intro}

The Galton-Watson (GW) or branching process is a simple and well-used model for describing populations evolving in time. It is defined by a sequence of non-negative integer-valued random variables $\{X_t\}_{t\in\mathbb{N}}$ satisfying
\begin{eqnarray}\label{gw}
X_t=\sum_{k=1}^{X_{t-1}}\zeta_{t,k},	\quad t\in\mathbb N,
\end{eqnarray}	
with $X_0=1$ by convention, where $\{\zeta_{t,k}\}_{t,k\in\mathbb{N}}$ is a doubly infinite array of independent and identically distributed (iid) random variables. Write $F$ for the offspring distribution, so that $\zeta_{t,k}\sim F$ for all $t,k\geq 1$.  In the populational context, the random variable $X_t$ denotes the size of the $t$-th generation. A generalization of this model is obtained by allowing an independent immigration process in (\ref{gw}), which is known as the GW process with immigration (GWI) and given by
\begin{eqnarray}\label{gwi}
X_t=\sum_{k=1}^{X_{t-1}}\zeta_{t,k}+\epsilon_t,\quad t\in\mathbb N,	
\end{eqnarray}	
where $\{\epsilon_t\}_{t\in\mathbb{N}}$ is assumed to be a sequence of iid non-negative integer-valued random variables, with $\epsilon_t$ independent of $X_{s-1}$ and $\zeta_{s,k}$, for all $k\geq1$ and for all $s\leq t$. If one assumes $\alpha\equiv E(\zeta_{s,k})<\infty$ and $\mu_\epsilon\equiv E(\epsilon_t)<\infty$, then the conditional expectation of the size of the $t$-th generation given the size of the $(t-1)$-th generation, is linear on $X_{t-1}$ and given by  
\begin{eqnarray}\label{condexpectgwi}
E(X_t|X_{t-1})=\alpha X_{t-1}+\mu_\epsilon.
\end{eqnarray}		
	
An interesting example appears when the offspring is Bernoulli distributed, that is, when $\p(\zeta=1)=1-\p(\zeta=0)=\alpha\in(0,1)$. This yields the binomial thinning operator ``$\circ$" by \cite{steandvan1979}, which is defined by $$\alpha\circ X_{t-1}\equiv\sum_{k=1}^{X_{t-1}}\zeta_{t,k}.$$ In this case, the GWI process in (\ref{gwi}) is related to the first-order Integer-valued AutoRegressive (INAR) models presented in \cite{alzalo1987}, \cite{mck1988}, and \cite{dioetal1995}. Conditional least squares estimation for the GWI/INAR models were explored, for instance, by \cite{weiwin1990}, \cite{ispetal2003}, \cite{fremcc2005}, and \cite{rah2008}. 

Alternative integer-valued processes based on non-additive innovation through maximum and minimum operations were proposed by \cite{lit1992}, \cite{lit1996}, \cite{kal1995}, \cite{scoetal2016}, and \cite{aleris2021}. For the count processes $\{X_t\}_{t\in\mathbb N}$ considered in these works, a certain non-linearity is induced in the sense that the conditional expectation $E(X_t|X_{t-1})$ is non-linear on $X_{t-1}$  (and also the conditional variance) in contrast with (\ref{condexpectgwi}). We refer to these models as ``non-linear" along with this paper. On the other hand, the immigration interpretation in a populational context is lost due to the non-additive innovation assumption.

Our aim in this paper is to introduce an alternative model to the classic GW process with immigration (GWI) and Bernoulli offspring. We develop a modified GWI process (MGWI) based on a new thinning operator/offspring mechanism while preserving the additive innovation, which has a practical interpretation. We show that this new mechanism, called the geometric thinning operator, induces a certain non-linearity when compared to the classic GWI/INAR processes. We now highlight other contributions of the present paper: 
\begin{itemize}
\item[(i)] development of inferential procedures and numerical experiments, which are not well-explored for the existing non-linear models aforementioned;
\item[(ii)] properties of the novel geometric thinning operator are established; 
\item[(iii)] a particular MGWI process with geometric marginals is investigated in detail, including an explicit expression for the autocorrelation function; 
\item[(iv)] both stationary and non-stationary cases are explored, being the last important for allowing the inclusion of covariates, a feature not considered by the current non-linear models; 
\item[(v)] empirical evidences that the non-linearity induced for our MGWI processes can produce better results in terms of prediction when compared to the linear case (commonly considered in the literature); 
\item[(vi)] a generalization of the geometric thinning operator and an associated MGWI process are also proposed and motivated for dealing with zero-inflated or zero-deflated count time series data.
\end{itemize}

The paper is organized as follows. In Section \ref{sec:novel_operator}, we introduce the new geometric thinning operator and explore its properties. Section \ref{sec:gwi_process} is devoted to the development of the modified Galton-Watson processes with immigration based on the new operator, with a focus on the case where the marginals are geometrically distributed. Two methods for estimating the model parameters are discussed in Section \ref{sec:inference}, including Monte Carlo simulations to evaluate the proposed estimators. In Section \ref{sec:nonstat}, we introduce a non-stationary MGWI process allowing for the inclusion of covariates and provide some Monte Carlo studies. Section \ref{sec:application} is devoted to two real data applications. Finally, in Section \ref{sec:generalization}, we develop a generalization of the geometric thinning operator and an associated modified GWI model.

\section{Geometric thinning operator: definition and properties}\label{sec:novel_operator}

In this section, we introduce a new thinning operator and derive its main properties. We begin by introducing some notation. For two random variables $X$ and $Y$, we write $\min\{X, Y\}: =X \wedge Y$ to denote the minimum between $X$ and $Y$. The probability generating function (pgf) of a non-negative integer-valued random variable $Y$ is denoted by
\begin{eqnarray*}
\Psi_{Y}(s)= \operatorname{E}\left(s^{Y}\right)=\sum_{k=0}^\infty s^k \p(Y=k), 
\end{eqnarray*}
for all values of $s$ for which the right-hand side converges absolutely. The $n$-th derivative of $\Psi_{Y}(x)$ with respect to $x$ and evaluated at $x=x_0$ is denoted by $\Psi_{Y}^{(n)}(x_0)$. 

Let $Z$ be a geometric random variable with parameter $\alpha>0$ and probability function assuming the form 
$$\p(Z = k) = \frac{\alpha^k}{(1+\alpha)^{k+1}},\quad k=0,1,\dots. $$

In this case, the pgf of $X$ is
\begin{eqnarray}\label{X_pgf}
\Psi_{Z}(s) = \frac{1}{1+\alpha(1-s)},\quad |s|<1+\alpha^{-1},
\end{eqnarray}
and the parameter $\alpha$ has the interpretation $\alpha=E(Z)>0$. The shorthand notation $Z\sim  \mathrm{Geo}(\alpha)$ will be used throughout the text.
We are ready to introduce the new operator and explore some of its  properties.

\begin{definition}(Geometric thinning operator) Let $X$ be a non-negative integer-valued random variable, independent of $Z\sim  \mathrm{Geo}(\alpha)$, with $\alpha>0$. The geometric thinning operator $\triangle$ is defined by
\begin{equation}\label{minop}
\alpha \minop X \equiv \min\left(X, Z\right).
\end{equation}
\end{definition}

\begin{remark}
The operator $\minop$ defined in (\ref{minop}) satisfies $\alpha \minop X \leq X$, like the classic binomial thinning operator $\circ$. Therefore, $\minop$ is indeed a thinning operator.
\end{remark}

In what follows, we present some properties of the proposed geometric thinning operator. We start by obtaining its probability generating function.

\begin{prop}\label{minpgf}
Let $X$ be a non-negative integer-valued random variable with pgf $\Psi_X$. Then, the pgf of $\alpha \minop X$ is given by 
\begin{eqnarray*}
\Psi_{\alpha \minop X}(s)=\frac{1 +\alpha(1-s) \Psi_{X}\left(\dfrac{\alpha s}{1+\alpha}\right)}{1+\alpha(1 - s)},\quad |s|<1+\alpha^{-1}.
\end{eqnarray*}
\end{prop}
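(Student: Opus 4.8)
The plan is to condition on the value of $X$ and use its independence from $Z$. Fix a nonnegative integer $n$ and set $q \equiv \alpha/(1+\alpha)$, so that the geometric law becomes $\p(Z=k) = (1+\alpha)^{-1}q^{k}$ and its tail is $\p(Z \geq n) = q^{n}$ (a one-line geometric-series computation). The first step is to evaluate $\operatorname{E}\!\left(s^{X\wedge Z}\mid X=n\right) = \operatorname{E}\!\left(s^{\,n\wedge Z}\right)$ by splitting according to whether $Z<n$ (so $n\wedge Z = Z$) or $Z\geq n$ (so $n\wedge Z = n$):
\[
\operatorname{E}\!\left(s^{\,n\wedge Z}\right) = \frac{1}{1+\alpha}\sum_{k=0}^{n-1}(sq)^{k} + s^{n}\p(Z\geq n) = \frac{1}{1+\alpha}\cdot\frac{1-(sq)^{n}}{1-sq} + (sq)^{n}.
\]

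The second step is the algebraic simplification. Since $1-sq = \bigl(1+\alpha(1-s)\bigr)/(1+\alpha)$, the prefactor $\tfrac{1}{1+\alpha}\cdot\tfrac{1}{1-sq}$ collapses to $\bigl(1+\alpha(1-s)\bigr)^{-1}$, and putting the two terms over this common denominator yields the compact identity $\operatorname{E}\!\left(s^{\,n\wedge Z}\right) = \bigl(1+\alpha(1-s)(sq)^{n}\bigr)\big/\bigl(1+\alpha(1-s)\bigr)$, where $sq = \alpha s/(1+\alpha)$. Averaging this over $n$ against $\p(X=n)$, pulling the factor $\bigl(1+\alpha(1-s)\bigr)^{-1}$ outside the sum and recognizing $\sum_{n\geq 0}\p(X=n)(sq)^{n} = \Psi_{X}\!\left(\alpha s/(1+\alpha)\right)$ then gives exactly the claimed expression for $\Psi_{\alpha\minop X}(s)$. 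I would also record the range of validity: the hypothesis $|s| < 1+\alpha^{-1}$ is precisely $|sq| < 1$, which is what makes the inner geometric series converge and keeps the argument $\alpha s/(1+\alpha)$ strictly inside the unit disk, where $\Psi_{X}$ is finite; and $1+\alpha(1-s)$ vanishes only at $s = 1+\alpha^{-1}$, on the boundary circle, which is why the inequality must be strict.

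Since the computation is otherwise elementary, the only genuinely delicate point is the interchange of the outer sum over $n$ (coming from conditioning on $X$) with the finite inner sum defining $\operatorname{E}(s^{\,n\wedge Z})$, and the justification of convergence on the full complex disk $|s| < 1+\alpha^{-1}$. For real $s\in(0,1)$ the interchange is immediate by Tonelli's theorem applied to nonnegative terms; the identity for general $s$ with $|s| < 1+\alpha^{-1}$ then follows either by a direct absolute-convergence (dominated convergence) estimate, using $|\p(X=n)(sq)^{n}|\leq \p(X=n)$ when $|sq|\le 1$, or by analytic continuation from $(0,1)$, both sides being holomorphic in $s$ on that disk. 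I expect this bookkeeping — tracking the radius of convergence and legitimizing the summation swap — rather than any step of the algebra, to be where a little care is needed.
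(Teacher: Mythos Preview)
Your proof is correct and takes a genuinely different route from the paper's. The paper first derives the probability mass function of $\alpha\minop X$ via $\p(\alpha\minop X=k)=\p(\alpha\minop X\ge k)-\p(\alpha\minop X\ge k+1)$, obtaining an expression involving $\p(X=k)$ and the tail $\p(X\ge k+1)$, and then sums $s^k$ against it; this forces a double sum over $k$ and $l\ge k$ and an explicit swap of summation order. You instead condition on $X=n$ and compute the conditional pgf $\operatorname{E}(s^{\,n\wedge Z})$ in closed form using the finite geometric series for $Z$, then average over $n$. Your approach is more direct: it exploits the geometric law of $Z$ at the outset and avoids ever writing down the pmf of the minimum or the tail of $X$. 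It is worth noting that the conditional identity you derive, $\operatorname{E}(s^{\alpha\minop X}\mid X=n)=\bigl(1+\alpha(1-s)(\alpha s/(1+\alpha))^{n}\bigr)\big/\bigl(1+\alpha(1-s)\bigr)$, is exactly the formula the paper records later as equation~\eqref{gen_1} when proving the joint pgf of $(X_t,X_{t-1})$; so your argument effectively front-loads that computation and reuses it to get Proposition~\ref{minpgf} in one step, whereas the paper proves Proposition~\ref{minpgf} independently and only derives~\eqref{gen_1} when it is needed downstream.
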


\begin{proof} 
By the independence assumption between $X$ and $Z$, it holds that
\begin{align*}
\p(\alpha \minop X = k) & = \p(\alpha \minop X \ge k) - \p(\alpha \minop X \ge k+1)\\&= \p(  Z \ge k) \p( X \ge k) - \p( Z \ge k+1) \p( X \ge k+1)\\
& = \left(\frac{\alpha}{1+\alpha}\right)^k\left[ \p( X = k) +  \frac{1}{1+\alpha}  \p( X \ge k+1)\right].
\end{align*}

Hence,
\begin{align*}
\Psi_{\alpha \minop X}(s)
&=\sum_{k=0}^\infty \left(\frac{\alpha s}{1+\alpha}\right)^k \p( X = k) + 
\frac{1}{1+\alpha} \sum_{k=0}^\infty \left(\frac{\alpha s}{1+\alpha}\right)^k  \p( X \ge k+1)\\
&=\Psi_{X}\left(\frac{\alpha s}{1+\alpha}\right) - \frac{1}{1+\alpha} \Psi_{X}\left(\frac{\alpha s}{1+\alpha}\right)
+ \frac{1}{1+\alpha} \sum_{k=0}^\infty \left(\frac{\alpha s}{1+\alpha}\right)^k  \p( X \ge k)\\
&= \frac{\alpha}{1+\alpha}\Psi_{X}\left(\frac{\alpha s}{1+\alpha}\right) +
\frac{1}{1+\alpha} \sum_{k=0}^\infty \left(\frac{\alpha s}{1+\alpha}\right)^k  \p( X \ge k).  
\end{align*}

The second term on the last equality can be expressed as
\begin{align*}
\frac{1}{1+\alpha} \sum_{k=0}^\infty \left(\frac{\alpha s}{1+\alpha}\right)^k  \p( X \ge k) &=
\frac{1}{1+\alpha} \sum_{k=0}^\infty \left(\frac{\alpha s}{1+\alpha}\right)^k \sum_{l=k}^\infty \p( X = l)\\&= \frac{1}{1+\alpha} \sum_{l=0}^\infty \sum_{k=0}^l \left(\frac{\alpha s}{1+\alpha}\right)^k \p( X = l)\\
&=  \frac{1}{1+\alpha -\alpha s} \left[1- \frac{\alpha s}{1+\alpha} \Psi_{X}\left(\frac{\alpha s}{1+\alpha}\right) \right].
\end{align*}

The result follows by rearranging the terms.
\end{proof}

The next result gives us the moments of $\alpha\minop X$, which will be important to discuss prediction and forecasting in what follows.

\begin{prop}\label{op_mean} Let $\triangle$ be the geometric thinning operator in \eqref{minop}. It holds that
the $n$-th factorial moment of $\alpha \minop X$ is given by
$$
\E\big((\alpha \minop X)_n\big)=n!\alpha^n\left\{1-\sum_{k=0}^{n-1}\dfrac{\Psi_{X}^{(k)}\left(\frac{\alpha }{1+\alpha}\right)}{k!(1+\alpha)^k}\right\},
$$ 
for $n\in\mathbb N$, where $(\alpha \minop X)_n\equiv \alpha \minop X\times(\alpha \minop X-1)\times\ldots\times (\alpha \minop X-n+1)$.
\end{prop}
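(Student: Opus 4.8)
The plan is to use the classical identity that, for a non-negative integer-valued random variable $Y$ whose pgf $\Psi_Y$ is finite and analytic in a neighbourhood of $s=1$, the $n$-th factorial moment is $\E(Y_n)=\Psi_Y^{(n)}(1)$. This applies to $Y=\alpha\minop X$: since $\alpha\minop X=\min(X,Z)\le Z$ with $Z\sim\mathrm{Geo}(\alpha)$, the variable $\alpha\minop X$ inherits the finiteness of all moments of $Z$, and by Proposition \ref{minpgf} its pgf is a ratio of functions analytic near $s=1$ whose denominator equals $1$ at $s=1$. Hence it suffices to compute $\Psi_{\alpha\minop X}^{(n)}(1)$.

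First I would rewrite the pgf from Proposition \ref{minpgf} in a form better suited to expansion at $s=1$. Writing $\alpha(1-s)=\big(1+\alpha(1-s)\big)-1$ in the numerator yields the decomposition
$$
\Psi_{\alpha\minop X}(s)=\Psi_{X}\!\left(\frac{\alpha s}{1+\alpha}\right)+\frac{1-\Psi_{X}\!\left(\frac{\alpha s}{1+\alpha}\right)}{1+\alpha(1-s)}.
$$
Next I would expand each ingredient as a power series in $(s-1)$ about $s=1$. Since $1+\alpha(1-s)=1-\alpha(s-1)$, one has $\big(1+\alpha(1-s)\big)^{-1}=\sum_{j\ge0}\alpha^{j}(s-1)^{j}$ for $|s-1|<\alpha^{-1}$. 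For $g(s):=\Psi_{X}\!\big(\tfrac{\alpha s}{1+\alpha}\big)$, the chain rule gives $g^{(k)}(1)=\big(\tfrac{\alpha}{1+\alpha}\big)^{k}\Psi_{X}^{(k)}\!\big(\tfrac{\alpha}{1+\alpha}\big)$, so its Taylor coefficients at $s=1$ are explicit; note $g$ is analytic near $s=1$ because $\tfrac{\alpha s}{1+\alpha}$ then stays in a neighbourhood of $\tfrac{\alpha}{1+\alpha}<1$, well inside the domain of convergence of $\Psi_X$.

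Then I would extract the coefficient of $(s-1)^{n}$ in $\Psi_{\alpha\minop X}(s)$ via the Cauchy product. In the second summand, $1-g(s)$ has constant term $1-g(1)$ and $k$-th coefficient $-g^{(k)}(1)/k!$; pairing with $\alpha^{\,n-k}$ from the geometric series gives, as coefficient of $(s-1)^n$, the quantity $\alpha^{n}-\sum_{k=0}^{n}\tfrac{g^{(k)}(1)}{k!}\alpha^{\,n-k}$. Adding the coefficient $g^{(n)}(1)/n!$ contributed by the first summand cancels the $k=n$ term, leaving $\alpha^{n}-\sum_{k=0}^{n-1}\tfrac{g^{(k)}(1)}{k!}\alpha^{\,n-k}$. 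Multiplying by $n!$ gives $\Psi_{\alpha\minop X}^{(n)}(1)$, and substituting $g^{(k)}(1)=\big(\tfrac{\alpha}{1+\alpha}\big)^{k}\Psi_{X}^{(k)}\!\big(\tfrac{\alpha}{1+\alpha}\big)$ and factoring out $\alpha^{n}$ produces exactly the claimed formula. The only step requiring genuine care is the justification that $\E\big((\alpha\minop X)_n\big)=\Psi_{\alpha\minop X}^{(n)}(1)$ and that the two power series may be multiplied and read off term by term near $s=1$; the rest is bookkeeping. A viable alternative to the coefficient extraction would be to differentiate the polynomial identity $\Psi_{\alpha\minop X}(s)\big(1+\alpha(1-s)\big)=1+\alpha(1-s)\Psi_{X}\!\big(\tfrac{\alpha s}{1+\alpha}\big)$ exactly $n$ times with the Leibniz rule and evaluate at $s=1$, solving the resulting recursion for $\Psi_{\alpha\minop X}^{(n)}(1)$, but the power-series route is the shortest.
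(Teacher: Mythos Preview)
Your argument is correct. The additive decomposition $\Psi_{\alpha\minop X}(s)=g(s)+\big(1-g(s)\big)/\big(1+\alpha(1-s)\big)$ together with the Cauchy product at $s=1$ yields exactly the stated formula, and the cancellation of the $k=n$ term is the key observation that produces the truncated sum $\sum_{k=0}^{n-1}$.

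Your route differs from the paper's. The paper keeps the pgf in the product form $\Psi_{\alpha\minop X}(s)=d_1(s)\,d_2(s)$ with $d_1(s)=1+\alpha(1-s)\Psi_X\!\big(\tfrac{\alpha s}{1+\alpha}\big)$ and $d_2(s)=\big(1+\alpha(1-s)\big)^{-1}$, and applies the Leibniz rule directly to this product, evaluating at $s=1$; the derivatives $d_2^{(k)}(1)=k!\,\alpha^k$ are immediate, while $d_1^{(j)}(1)$ requires one more Leibniz step on $(1-s)\,g(s)$. Your additive splitting avoids that second product rule: once you peel off $g(s)$, only one Cauchy product remains and the bookkeeping is shorter. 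The alternative you mention at the end---differentiating the cleared identity $\Psi_{\alpha\minop X}(s)\big(1+\alpha(1-s)\big)=d_1(s)$ and solving the two-term recursion for $\Psi_{\alpha\minop X}^{(n)}(1)$---is closer in spirit to the paper's argument, though organized slightly differently (Leibniz on the left-hand side rather than on the quotient). Either way, all three variants are equivalent reorganizations of the same finite linear algebra in the Taylor coefficients; your power-series extraction is arguably the most transparent of the three.
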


\begin{proof} 
The result follows by using the pgf given in Proposition \ref{minpgf} and the generalized Leibniz rule for derivatives, namely $(d_1d_2)^{(n)}(s)=\sum_{k=0}^n \binom{n}{k} d_1^{(n-k)}(s)d_2^{(k)}(s)$, with $d_1(s)=1 +\alpha(1-s) \Psi_{X}\left(\dfrac{\alpha s}{1+\alpha}\right)$ and  $d_2(s)=\dfrac{1}{1+\alpha(1 - s)}$.
\end{proof}

In what follows, the notation $X\Rightarrow Y$ means $X$ weakly converges to $Y$.
\begin{prop}
Let $\triangle$ be the geometric thinning operator in \eqref{minop}. Then,
\begin{itemize}
\item[ (i) ]
$
 \alpha \minop X \Rightarrow 0, \quad \text{as } \alpha\to0,
$
\item[ (ii) ]
$
 \alpha \minop X \Rightarrow X, \quad \text{as } \alpha\to\infty.
$
\end{itemize}
\end{prop}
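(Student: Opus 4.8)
The plan is to identify the two limiting laws through their probability generating functions, using the explicit formula $\Psi_{\alpha\minop X}(s)=\dfrac{1+\alpha(1-s)\Psi_X\!\left(\frac{\alpha s}{1+\alpha}\right)}{1+\alpha(1-s)}$ from Proposition \ref{minpgf} together with the continuity theorem for pgfs: if a family of pgfs converges pointwise on $[0,1)$ to a function that is again a pgf, then the underlying laws converge weakly. Write $G_\alpha(s)\equiv\Psi_{\alpha\minop X}(s)$ and fix $s\in[0,1)$ throughout.

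For (i), as $\alpha\to0$ we have $\alpha s/(1+\alpha)\to0$, so $\Psi_X\!\left(\frac{\alpha s}{1+\alpha}\right)\to\Psi_X(0)=\p(X=0)\le 1$; since $\alpha(1-s)\to0$, both the numerator and the denominator of $G_\alpha(s)$ tend to $1$, hence $G_\alpha(s)\to1$. As $1$ is the pgf of the point mass at $0$, the continuity theorem gives $\alpha\minop X\Rightarrow0$. For (ii), I would divide numerator and denominator by $\alpha$ to write $G_\alpha(s)=\dfrac{\alpha^{-1}+(1-s)\Psi_X\!\left(\frac{\alpha s}{1+\alpha}\right)}{\alpha^{-1}+(1-s)}$. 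As $\alpha\to\infty$ one has $\alpha s/(1+\alpha)\to s$, and by continuity of $\Psi_X$ on $[0,1)$, $\Psi_X\!\left(\frac{\alpha s}{1+\alpha}\right)\to\Psi_X(s)$; therefore $G_\alpha(s)\to\dfrac{(1-s)\Psi_X(s)}{1-s}=\Psi_X(s)$, and since $\Psi_X$ is a pgf the continuity theorem yields $\alpha\minop X\Rightarrow X$.

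These computations are routine; the only points deserving a word of care are checking that the pointwise limits are genuine pgfs so the continuity theorem applies (immediate here, the limits being $1$ and $\Psi_X$) and the continuity of $\Psi_X$ at the argument $\frac{\alpha s}{1+\alpha}\in[0,1)$, where $\Psi_X$ is in fact real-analytic, so I do not anticipate a genuine obstacle. As an alternative that bypasses Proposition \ref{minpgf}, one may argue by coupling: realize $\alpha\minop X$ as $\min(X,Z_\alpha)$ with $Z_\alpha\sim\mathrm{Geo}(\alpha)$ independent of $X$; then $\p(Z_\alpha\ge1)=\alpha/(1+\alpha)\to0$ as $\alpha\to0$ forces $\min(X,Z_\alpha)\to0$ in probability, while $\p(Z_\alpha<X)=\sum_{k\ge1}\p(X=k)\bigl(1-(\alpha/(1+\alpha))^k\bigr)\to0$ as $\alpha\to\infty$ by dominated convergence, so $\min(X,Z_\alpha)=X$ with probability tending to $1$; in either case convergence in probability delivers the stated weak convergence, and the pgf route is the shorter one given that Proposition \ref{minpgf} is already in hand.
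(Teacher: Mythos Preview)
Your proof is correct and follows exactly the approach the paper itself takes: the paper's own proof consists of a single sentence invoking Proposition~\ref{minpgf} and the continuity theorem for pgfs, and you have simply carried out the pointwise-limit computation that this entails. The supplementary coupling argument you give is a nice, self-contained alternative, but it is not needed and does not appear in the paper.
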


\begin{proof}
The proof follows immediately from Proposition \ref{minpgf} and the Continuity Theorem for pgf's. 
\end{proof}

We now show a property of the operator $\minop$ of own interest.

\begin{prop}
Let $Z_{1},\ldots,Z_{n}$ be independent geometric random variables with parameters $\alpha_1,\ldots,\alpha_n$, respectively. Assume that $X_1, \ldots, X_n$  are non-negative integer-valued random variables independent of the $Z$'s, and let $\alpha_i\minop X_i=\min\left(X_i,Z_i\right)$. Then,
\begin{eqnarray}\label{minalpha}
\wedge_{k=1}^n \alpha_k \minop X_k = \widetilde{\alpha}_n  \minop \wedge_{k=1}^n X_k,
\end{eqnarray}
with $\widetilde{\alpha}_n=\dfrac{\prod_{k=1}^n \alpha_k}{\prod_{k=1}^n(1+ \alpha_k) - \prod_{k=1}^n \alpha_k}$,  $n\in\mathbb N$.
\end{prop}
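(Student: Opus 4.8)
The plan is to show both sides of \eqref{minalpha} by a direct computation of tail probabilities, exploiting the fact that the minimum of independent random variables has a tail that factorizes. First I would observe that, since $Z_1,\dots,Z_n$ are independent geometric variables and independent of the $X$'s, the family $\{\min(X_k,Z_k)\}_{k=1}^n$ need not be independent, but the key identity is at the level of events: for any integer $m\ge 0$,
\[
\Big\{\wedge_{k=1}^n \alpha_k\minop X_k \ge m\Big\}=\bigcap_{k=1}^n\{X_k\ge m\}\cap\bigcap_{k=1}^n\{Z_k\ge m\}.
\]
Taking probabilities and using independence, together with the geometric tail $\p(Z_k\ge m)=\bigl(\alpha_k/(1+\alpha_k)\bigr)^m$, gives
\[
\p\Big(\wedge_{k=1}^n \alpha_k\minop X_k \ge m\Big)=\p\Big(\wedge_{k=1}^n X_k\ge m\Big)\prod_{k=1}^n\left(\frac{\alpha_k}{1+\alpha_k}\right)^m.
\]

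Next I would do the same computation for the right-hand side of \eqref{minalpha}. Writing $\widetilde Z_n\sim\mathrm{Geo}(\widetilde\alpha_n)$ independent of the $X$'s, the same event decomposition yields
\[
\p\Big(\widetilde\alpha_n\minop \wedge_{k=1}^n X_k\ge m\Big)=\p\Big(\wedge_{k=1}^n X_k\ge m\Big)\left(\frac{\widetilde\alpha_n}{1+\widetilde\alpha_n}\right)^m.
\]
So the identity \eqref{minalpha} (in distribution) reduces to the purely algebraic claim that
\[
\frac{\widetilde\alpha_n}{1+\widetilde\alpha_n}=\prod_{k=1}^n\frac{\alpha_k}{1+\alpha_k},
\]
and solving this for $\widetilde\alpha_n$ gives exactly $\widetilde\alpha_n=\prod_{k=1}^n\alpha_k\big/\bigl(\prod_{k=1}^n(1+\alpha_k)-\prod_{k=1}^n\alpha_k\bigr)$, which is the stated value; one should also note $\widetilde\alpha_n>0$ so that $\mathrm{Geo}(\widetilde\alpha_n)$ is well defined. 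Since two non-negative integer-valued random variables with the same tail function $\p(\,\cdot\ge m)$ for all $m$ have the same distribution, this establishes the equality in law asserted in \eqref{minalpha}.

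The only genuine subtlety — and the point I would state carefully rather than gloss over — is what ``$=$'' means in \eqref{minalpha}: the two sides are built from different geometric variables, so the claim is an equality in distribution, and it is worth remarking that in fact one can realize it as an almost-sure equality by using a single common random variable once $\wedge_{k=1}^n X_k$ is given, but the distributional statement is all that is needed for the applications. A second minor point is to double-check the index-$0$ case $m=0$ (both tails equal $1$), which is automatic. The algebraic rearrangement for $\widetilde\alpha_n$ is routine and I would not belabor it. Thus the main ``obstacle'' is merely bookkeeping: correctly handling the non-independence of the thinned variables by working with the joint tail event rather than trying to multiply marginal tails of $\alpha_k\minop X_k$ directly.
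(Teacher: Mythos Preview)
Your proof is correct, but it proceeds by a genuinely different route from the paper's. The paper argues by induction on $n$: for $n=2$ it writes
\[
\wedge_{k=1}^2 \alpha_k \minop X_k = (X_1\wedge X_2)\wedge(Z_1\wedge Z_2)
\]
using associativity and commutativity of $\min$, then invokes the (implicit) fact that $Z_1\wedge Z_2\sim\mathrm{Geo}(\widetilde\alpha_2)$ to identify the right-hand side with $\widetilde\alpha_2\minop(X_1\wedge X_2)$; the inductive step just repeats this. Your argument instead computes the tail $\p(\cdot\ge m)$ of both sides directly for every $m$, reducing the claim to the algebraic identity $\widetilde\alpha_n/(1+\widetilde\alpha_n)=\prod_k \alpha_k/(1+\alpha_k)$. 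This avoids induction entirely and makes the role of the geometric tail completely transparent. The paper's approach has the mild advantage that, by taking $\widetilde Z_n:=\wedge_{k=1}^n Z_k$, it exhibits \eqref{minalpha} as a pointwise (almost-sure) equality rather than merely one in distribution --- exactly the coupling you allude to in your closing remarks. Your careful flagging of what ``$=$'' means here is well taken; the paper is silent on this point.
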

\begin{proof}
We prove \eqref{minalpha} by induction on $n$. For $n=2$, it holds that
$$
\wedge_{k=1}^2 \alpha_k \minop X_k = \wedge_{k=1}^2 (X_k  \wedge Z_{k})=
(X_1\wedge X_2)\wedge(Z_{1}\wedge Z_{2})=
\widetilde{\alpha}_2 \minop \wedge_{k=1}^2 X_k,
$$
where $\widetilde{\alpha}_2=\dfrac{\prod_{k=1}^2 \alpha_k}{\prod_{k=1}^2(1+ \alpha_k) - \prod_{k=1}^2 \alpha_k}$. Assume that 
$\wedge_{k=1}^{n-1} \alpha_k \minop X_k = \widetilde{\alpha}_{n-1}  \minop \wedge_{k=1}^{n-1} X_k$. 
Since 
$$\wedge_{k=1}^n \alpha_k \minop X_k =(\wedge_{k=1}^{n-1} \alpha_k \minop X_k) \wedge (\alpha_n \minop X_n) =
(\,\widetilde{\alpha}_{n-1}  \minop \wedge_{k=1}^{n-1} X_k) \wedge (\alpha_n \minop X_n) = 
 \widetilde{\alpha}_n  \minop \wedge_{k=1}^n X_k,$$ the proof is complete.
 \end{proof}

In the next section, we introduce our class of modified GW processes and provide some of their properties.

\section{Modified Galton-Watson processes with immigration}\label{sec:gwi_process}

In this section, we introduce modified GW processes with immigration based on the new geometric thinning operator $ \minop$ defined in Section \ref{sec:novel_operator} and explore a special case when the marginals are geometrically distributed.

\begin{definition}
A sequence of random variables $\{X_t\}_{t\in\mathbb{N}}$ is said to be a modified GW process with immigration (in short MGWI) if it satisfies the stochastic equation
\begin{equation}\label{mgwi}
X_t = \alpha \minop X_{t-1} + \epsilon_t,\quad t\in\mathbb{N},
\end{equation}
with $\alpha \minop X_{t-1} =\min\left( X_{t-1},Z_{t}\right)$, $\left\lbrace Z_{t}\right\rbrace_{t\in\mathbb{N}}$ being a sequence of iid random variables with $Z_{1}{\sim}\mbox{Geo}(\alpha)$, $\{\epsilon_t\}_{t\in\mathbb{N}}$ being an iid non-negative integer-valued random variables called innovations, where $\epsilon_t$ is independent of $X_{t-l}$ and $Z_{t-l+1}$, for all $l\geq 1$, and $X_0$ is some starting value/random variable.
\end{definition}

The following theorem is an important result concerning the modified GWI process. 

\begin{theorem} 
	The stochastic process $\{X_t\}_{t\in\mathbb{N}}$ in (\ref{mgwi}) is stationary and ergodic.
\end{theorem}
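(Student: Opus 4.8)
The plan is to treat $\{X_t\}_{t\in\mathbb N}$ as a time-homogeneous Markov chain on the non-negative integers whose one-step kernel $P(i,\cdot)$ is the distribution of $\min(i,Z)+\epsilon$, where $Z\sim\mathrm{Geo}(\alpha)$ is independent of an innovation $\epsilon$; the Markov property is immediate from \eqref{mgwi} together with the stated independence of $\{Z_t\}$ and $\{\epsilon_t\}$ from the past. The structural fact that drives the whole argument is $\p(Z=0)=(1+\alpha)^{-1}>0$. By independence of $Z$ and $\epsilon$, for every state $i$ and every set $A$ of non-negative integers,
\[
P(i,A)=\p\big(\min(i,Z)+\epsilon\in A\big)\ \ge\ \p\big(Z=0,\ \epsilon\in A\big)=\frac{1}{1+\alpha}\,\nu(A),
\]
where $\nu$ is the law of $\epsilon$. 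Hence the \emph{entire} state space is a small set (with $m=1$): the chain satisfies the uniform Doeblin minorization $P(i,\cdot)\ge(1+\alpha)^{-1}\,\nu(\cdot)$ for all $i$. Notably no moment assumption on $\epsilon$ is needed.

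From the minorization everything else follows by standard Markov chain theory (Doeblin's condition; see, e.g., Meyn and Tweedie). A one-step coupling argument turns it into $\|P^n(i,\cdot)-P^n(j,\cdot)\|_{\mathrm{TV}}\le\big(1-(1+\alpha)^{-1}\big)^n$ uniformly in $i,j$, so $\big(P^n(i,\cdot)\big)_n$ is Cauchy in total variation and converges to a limit $\pi$ that does not depend on $i$; then $\pi$ is the unique stationary distribution, the chain is uniformly ergodic, hence positive Harris recurrent, and aperiodic (the $m=1$ minorization precludes any nontrivial period). Initializing $X_0\sim\pi$ yields a strictly stationary version of the process. Finally, a stationary aperiodic positive Harris recurrent chain is mixing: for bounded $f,g$,
\[
\E\big[f(X_0)g(X_n)\big]=\int f(x)\left(\int g(y)\,P^n(x,\mathrm dy)\right)\pi(\mathrm dx)\ \longrightarrow\ \E\big[f(X_0)\big]\,\E\big[g(X_0)\big]
\]
by the convergence $\|P^n(x,\cdot)-\pi\|_{\mathrm{TV}}\to0$, and mixing implies ergodicity; in fact uniform ergodicity gives strong mixing at a geometric rate.

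An alternative, more hands-on route would make the same conclusion transparent and also shows where the real content lies. Since the events $\{Z_t=0\}$ are i.i.d.\ with positive probability, a.s.\ infinitely many of them occur, and whenever $Z_t=0$ one has $X_t=\epsilon_t$, so the chain forgets $X_{t-1}$ and restarts from $\epsilon_t$; iterating \eqref{mgwi} forward from the most recent such regeneration expresses the stationary process as a fixed, shift-covariant measurable functional of the i.i.d.\ sequence $\{(Z_s,\epsilon_s)\}$, whence $\{X_t\}$ is a factor of an i.i.d.\ process and therefore stationary and ergodic. The one genuinely delicate point in either approach is the passage from ``unique stationary law'' to ``ergodic stationary process'' — this is not automatic from uniqueness alone and is precisely what forces one to invoke aperiodicity and Harris recurrence (equivalently, the factor-of-i.i.d.\ representation). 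A minor caveat worth stating is that $\{X_t\}$ as written in \eqref{mgwi} is literally stationary only once $X_0$ is assigned the law $\pi$; for an arbitrary starting value the assertion is to be read in the asymptotic sense guaranteed by the ergodicity above.
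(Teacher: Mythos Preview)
Your proof is correct and takes a genuinely different route from the paper. The paper writes $X_t$ as a (claimed time-invariant) measurable function of the i.i.d.\ sequence $\{(Z_s,\epsilon_s)\}_{1\le s\le t}$ and then invokes Theorem~36.4 of Billingsley to transfer stationarity and ergodicity from the driver to $\{X_t\}$; this is precisely the factor-of-i.i.d.\ route you sketch as your alternative at the end. Your primary argument instead establishes the one-step Doeblin minorization $P(i,\cdot)\ge(1+\alpha)^{-1}\nu(\cdot)$ from the event $\{Z_t=0\}$ and deduces uniform ergodicity directly via coupling. The paper's argument is terser but, as you rightly observe, leans on a shift-covariant functional representation whose rigorous justification ultimately uses the same regeneration at $\{Z_t=0\}$ that you make explicit. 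Your Doeblin route buys more: an explicit geometric rate $\big(\alpha/(1+\alpha)\big)^n$ in total variation and strong mixing, without needing to construct the infinite-past representation. Your closing caveat---that literal strict stationarity requires $X_0\sim\pi$---is well taken and is a point the paper glosses over.
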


\begin{proof} Consider the process $\{W_t\}_{t\in\mathbb N}=\{(Z_{t},\epsilon_t)\}_{t\in\mathbb N}$. Since this is a sequence of iid bivariate vectors, it follows that $\{W_t\}_{t\geq 1}$ is stationary and ergodic. Now, note that there is a real function $\xi$, which does not depend on $t$, such that $$X_t=\xi\left((Z_{s},\epsilon_s),\,1\leq s\leq t\right).$$ Hence, the result follows by applying Theorem 36.4 from \cite{bill1995}.  
\end{proof}

From now on, we focus our attention on a special case from our class of MGWI processes when the marginals are geometrically distributed. To do this, let us first discuss the zero-modified geometric (ZMG) distribution, which will play an important role in our model construction. 

We say that a random variable $Y$ follows a ZMG distribution with parameters $\mu>0$ and $\pi\in(-1/\mu,1)$ if its probability function is given by
\begin{eqnarray*}
\p(Y = k) =
\left\{\begin{array}{ll}
	 \pi+(1-\pi)\dfrac{1}{(1+\mu)}, & \text{for } k=0,\\
	 (1-\pi)\dfrac{\mu^k}{(1+\mu)^{k+1}}, & \text{for } k=1,2,\ldots.
\end{array}\right.
\end{eqnarray*}
We denote $Y\sim\mathrm{ZMG}(\pi,\mu)$. The geometric distribution with mean $\mu$ is obtained as a particular case when $\pi=0$. For $\pi<0$ and $\pi>0$, the ZMG distribution is zero-deflated or zero-inflated with relation to the geometric distribution, respectively. The associated pgf assumes the form
\begin{eqnarray}\label{ZMG_pgf}
\Psi_{Y}(s) = \frac{1+\pi\mu(1-s)}{1+\mu(1-s)},\quad |s|<1+\mu^{-1}.
\end{eqnarray}

Now, assume that $X\sim\mathrm{Geo}(\mu)$, with $\mu>0$. We have that
\begin{align*}
\p(\alpha \minop X > z) &= \p(X > z)\p(Z > z) = \left[\left(\frac{\mu}{1+\mu}\right)\left(\frac{\alpha}{1+\alpha}\right)\right]^{z+1},  \quad z=0,1,\dots,
\end{align*}
which means $\alpha \minop X\sim\mathrm{Geo}\left(\dfrac{\alpha\mu}{1+\alpha + \mu}\right)$. From (\ref{mgwi}), it follows that a MGWI process with geometric marginals is well-defined if the function $\Psi_{\epsilon_1}(s)\equiv \dfrac{\Psi_X(s)}{\Psi_{\alpha \minop X}(s)}$ is a proper pgf, with $s$ belonging to some interval containing the value 1, where $\Psi_X(s)$ and $\Psi_{\alpha \minop X}(s)$ are the pgf's of geometric distributions with means $\mu$ and $\dfrac{\alpha\mu}{1+\alpha + \mu}$, respectively.
More specifically, we have 
\begin{equation}\label{epsilon_pgf}
\Psi_{\epsilon_1}(s) = \dfrac{1+\frac{\alpha}{1+\mu+\alpha}\mu(1-s)}{1+\mu(1-s)},\quad |s|<1+\mu^{-1},
\end{equation}
which corresponds to the pgf of a zero-modified geometric distribution with parameters $\mu$ and $\alpha/(1+\mu+\alpha)$; see (\ref{ZMG_pgf}). This enables us to define a new geometric process as follows.

\begin{definition}\label{D:def_mgwi}
The stationary geometric MGWI (Geo-MGWI) process $\{X_t\}_{t\in\mathbb{N}}$ is defined by assuming that (\ref{mgwi}) holds with $\{\epsilon_t\}_{t\in\mathbb{N}}\stackrel{iid}{\sim}\mathrm{ZMG}\left(\dfrac{\alpha}{1+\mu+\alpha},\mu\right)$ and $X_0\sim\mathrm{Geo}(\mu)$.
\end{definition}

From (\ref{epsilon_pgf}), we have that the mean and variance of the innovations $\{\epsilon_t\}_{t\geq1}$ are given by
\begin{equation*}
\mu_\epsilon := \E(\epsilon_t) = \frac{\mu(1+\mu)}{1+\mu+\alpha} \quad \text{and} \quad \sigma_\epsilon^2 := \Var(\epsilon_t) = \frac{\mu(1+\mu)}{1+\mu+\alpha}\left[1+\frac{\mu(1+\mu+2\alpha)}{1+\mu+\alpha}\right],
\end{equation*}
respectively. Additionally, the third and forth moments of the innovations are
\begin{eqnarray*}
E(\epsilon_t^3)=\dfrac{\mu(1+\mu)}{1+\mu+\alpha}(6\mu^2+4\mu+1)\quad\mbox{and}\quad E(\epsilon_t^4)=\dfrac{\mu(1+\mu)}{1+\mu+\alpha}(24\mu^3+36\mu^2+12\mu+5).
\end{eqnarray*}

In what follows, we assume that $\{X_t\}_{t\in\mathbb{N}}$ is a Geo-MGWI process and explore some of its properties. We start with the 1-step transition probabilities.

\begin{prop}\label{P:trans_prob}
The 1-step transition probabilities of the MGWI process, say $\p(x,y) \equiv \p(X_t=y \,|\,X_{t-1}=x)$, assumes the form
\begin{equation}\label{E:trans_prob}
    \p(x,y) =
    \begin{cases}
      \displaystyle\sum_{k=0}^{x-1}\p(Z = k)\p(\epsilon_t = y-k) + \p(Z \geq x)\p(\epsilon_t = y-x), & \text{for}\ \, x \leq y,\\
      \displaystyle\sum_{k=0}^{y}\p(Z = k)\p(\epsilon_t = y-k), & \text{for}\ \, x > y, \\
    \end{cases}
\end{equation}
for $x,y =0,1,\dots$. In particular, we have $\p(0,y) = \p(\epsilon_t = y)$.
\end{prop}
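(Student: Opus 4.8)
The plan is to exploit the fact that, by construction, $\{X_t\}_{t\in\mathbb N}$ is a Markov chain: since $X_t=\min(X_{t-1},Z_t)+\epsilon_t$ with the pair $(Z_t,\epsilon_t)$ independent of $X_{t-1}$ (and of each other), the conditional law of $X_t$ given the whole past depends only on $X_{t-1}$, and it does not depend on $t$ because the $Z_t$'s and $\epsilon_t$'s are identically distributed. So it suffices to compute $\p(X_t=y\mid X_{t-1}=x)$ by conditioning on $\{X_{t-1}=x\}$ and writing $X_t=\min(x,Z_t)+\epsilon_t$, a sum of two independent non-negative integer-valued random variables. Its conditional distribution is therefore the convolution
\[
\p(x,y)=\sum_{k=0}^{\infty}\p\big(\min(x,Z_t)=k\big)\,\p(\epsilon_t=y-k),
\]
with the convention $\p(\epsilon_t=j)=0$ for $j<0$.

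Next I would identify the law of the truncated geometric variable $\min(x,Z_t)$, where $Z_t\sim\mathrm{Geo}(\alpha)$. For $k=0,1,\dots,x-1$ the event $\{\min(x,Z_t)=k\}$ coincides with $\{Z_t=k\}$; the event $\{\min(x,Z_t)=x\}$ coincides with $\{Z_t\ge x\}$; and $\min(x,Z_t)$ never exceeds $x$. Substituting these into the convolution yields
\[
\p(x,y)=\sum_{k=0}^{x-1}\p(Z=k)\,\p(\epsilon_t=y-k)+\p(Z\ge x)\,\p(\epsilon_t=y-x).
\]

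Finally I would split according to the sign of $y-x$. If $x\le y$, every argument $y-k$ with $0\le k\le x-1$ and also $y-x$ is non-negative, so the expression is already in the first stated form. If $x>y$, then $\p(\epsilon_t=y-x)=0$ and $\p(\epsilon_t=y-k)=0$ for all $k$ with $y<k\le x-1$, so the sum collapses to $\sum_{k=0}^{y}\p(Z=k)\,\p(\epsilon_t=y-k)$, which is the second form. The particular case $x=0$ follows at once: the first sum is empty and $\p(Z\ge 0)=1$, leaving $\p(0,y)=\p(\epsilon_t=y)$.

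I do not anticipate a genuine obstacle here; the only point requiring care is the bookkeeping of which summands vanish in each regime, and making the convention $\p(\epsilon_t=j)=0$ for $j<0$ explicit, so that the two displayed cases agree on the common boundary and no probability mass is counted twice.
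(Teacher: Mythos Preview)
Your proposal is correct and follows essentially the same route as the paper: both condition on $X_{t-1}=x$, write the transition probability as a convolution of the law of $\min(x,Z)$ with that of $\epsilon_t$, identify the three-case pmf of $\min(x,Z)$, and then read off the two displayed formulas. The only cosmetic difference is that the paper writes the convolution as $\sum_{k=0}^{y}$ (so non-negativity of $\epsilon_t$ is built in) whereas you use $\sum_{k=0}^{\infty}$ with the explicit convention $\p(\epsilon_t=j)=0$ for $j<0$; the resulting case analysis is identical.
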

\begin{proof}
For $x=0$, we have that $\p(0,y) = \p(\alpha \minop 0 + \epsilon_t = y) = \p(\epsilon_t = y)$. For $x > 0$, it follows that $$\p(x,y) = \p(\alpha \minop x + \epsilon = y) = \sum_{k=0}^y \p(\alpha \minop x = k)\p(\epsilon = y-k),$$
where 
\begin{equation*}
    \p(\alpha \minop x = z) = 
    \begin{cases}
       0, & \text{for}\ \,x<z, \\
      \p(Z \geq z), & \text{for}\ \,x=z,\\
      \p(Z=z), & \text{for}\ \,x>z. \\
    \end{cases}
  \end{equation*}
This gives the desired transition probabilities in~\eqref{E:trans_prob}.
\end{proof}

\begin{prop}
The joint pgf of the discrete random vector $(X_t, X_{t-1})$ is given by
\begin{equation}
\Psi_{X_t, X_{t-1}}(s_1, s_2) = \frac{\Psi_{\epsilon}(s_1)}{1-\alpha(s_1-1)}\left[\Psi_X(s_2)-\alpha(s_1-1)\Psi_X\left(\frac{s_1s_2\alpha}{1+\alpha}\right)\right], 
\end{equation}
with $\Psi_X(\cdot)$ and $\Psi_\epsilon(\cdot)$ as in \eqref{X_pgf} and \eqref{epsilon_pgf}, respectively, where $s_1$ and $s_2$ belong to some intervals containing the value 1.
\end{prop}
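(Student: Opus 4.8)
The plan is to condition on $X_{t-1}$ and exploit the independence built into the model. Since $X_t = \min(X_{t-1},Z_t)+\epsilon_t$ with $\epsilon_t$ independent of the pair $(X_{t-1},Z_t)$, the joint pgf factors as
\[
\Psi_{X_t,X_{t-1}}(s_1,s_2)=\E\!\left(s_1^{\min(X_{t-1},Z_t)+\epsilon_t}\,s_2^{X_{t-1}}\right)=\Psi_{\epsilon}(s_1)\,\E\!\left(s_1^{\min(X_{t-1},Z_t)}\,s_2^{X_{t-1}}\right),
\]
so the task reduces to evaluating the bivariate expectation on the right, where $X_{t-1}\sim\mathrm{Geo}(\mu)$ and $Z_t\sim\mathrm{Geo}(\alpha)$ are independent. (Note that the geometric form of $X_{t-1}$ will not actually be needed for this step.)

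First I would compute, for a fixed non-negative integer $x$, the conditional pgf $g(x,s_1):=\E\!\left(s_1^{\min(x,Z_t)}\right)$. Splitting according to $\{Z_t<x\}$ and $\{Z_t\ge x\}$ and inserting $\p(Z_t=k)=\alpha^k/(1+\alpha)^{k+1}$ together with $\p(Z_t\ge x)=\bigl(\alpha/(1+\alpha)\bigr)^{x}$, this is a finite geometric sum,
\[
g(x,s_1)=\frac{1}{1+\alpha}\sum_{k=0}^{x-1}\left(\frac{s_1\alpha}{1+\alpha}\right)^{k}+\left(\frac{s_1\alpha}{1+\alpha}\right)^{x},
\]
which, after summing the series and using the identity $1-\frac{s_1\alpha}{1+\alpha}=\frac{1-\alpha(s_1-1)}{1+\alpha}$, collapses to the compact form
\[
g(x,s_1)=\frac{1-\alpha(s_1-1)\bigl(s_1\alpha/(1+\alpha)\bigr)^{x}}{1-\alpha(s_1-1)}.
\]

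Then I would take the expectation over $X_{t-1}$: writing $\E\!\left(s_1^{\min(X_{t-1},Z_t)}s_2^{X_{t-1}}\right)=\sum_{x\ge0}s_2^{x}\,\p(X_{t-1}=x)\,g(x,s_1)$ and substituting the closed form of $g$ splits the sum into two pieces, one reproducing $\Psi_X(s_2)$ and the other $\Psi_X\!\left(s_1 s_2\alpha/(1+\alpha)\right)$, which gives
\[
\E\!\left(s_1^{\min(X_{t-1},Z_t)}s_2^{X_{t-1}}\right)=\frac{\Psi_X(s_2)-\alpha(s_1-1)\,\Psi_X\!\left(\dfrac{s_1 s_2\alpha}{1+\alpha}\right)}{1-\alpha(s_1-1)};
\]
multiplying by $\Psi_{\epsilon}(s_1)$ yields the stated formula. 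The only delicate points are the algebraic simplification of $g(x,s_1)$ — one must keep track of $1-\alpha(s_1-1)=1+\alpha(1-s_1)$ so the denominator matches the form in \eqref{X_pgf}–\eqref{epsilon_pgf} — and justifying convergence of the series and the interchange of summations, which needs $s_1,s_2$ to stay in a neighbourhood of $1$ small enough that $|s_1\alpha/(1+\alpha)|<1$ and all the pgf's appearing converge absolutely; this is precisely the ``intervals containing the value $1$'' hypothesis in the statement. I do not anticipate any genuine obstacle beyond this bookkeeping.
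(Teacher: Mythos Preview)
Your proposal is correct and follows essentially the same route as the paper: factor out $\Psi_\epsilon(s_1)$ by independence, compute the conditional pgf $\E(s_1^{\alpha\minop X}\mid X=x)$ as a finite geometric sum yielding $\dfrac{1-\alpha(s_1-1)[s_1\alpha/(1+\alpha)]^x}{1-\alpha(s_1-1)}$, and then take expectation over $X_{t-1}$ to recognize $\Psi_X(s_2)$ and $\Psi_X(s_1 s_2\alpha/(1+\alpha))$. The paper's proof is the same argument with slightly terser bookkeeping.
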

\begin{proof}
We have that 
$$
\Psi_{X_t, X_{t-1}}(s_1, s_2) = \E\left(s_1^{X_t}s_2^{X_{t-1}}\right) = \E\left(s_1^{\alpha\minop X_{t-1}+ \epsilon_t}s_2^{X_{t-1}}\right)  = \Psi_{\epsilon_t}(s_1)\E\left(s_2^{X_{t-1}}\E\left(s_1^{\alpha\minop X_{t-1}}\,|\,X_{t-1}\right)\right),
$$
where
\begin{align}\label{gen_1}
\operatorname{E}\left(s_1^{\alpha \minop X}\,|\,X=x\right) &= \sum_{k=0}^{x-1}s_1^k\p(Z=k) + s_1^x \p(Z\geq x) = \frac{1-\alpha(s_1-1)\left[s_1\alpha/(1+\alpha)\right]^x}{1-\alpha(s_1-1)}.
\end{align}
Therefore,
\begin{align*}
\Psi_{X_t, X_{t-1}}(s_1, s_2) &= \Psi_{\epsilon_t}(s_1)\E\left(\frac{s_2^X}{1-\alpha(s_1-1)}-\frac{\alpha(s_1-1)}{1-\alpha(s_1-1)}\left(\frac{s_1s_2\alpha}{1+\alpha}\right)^X\right)\\
&= \frac{\Psi_{\epsilon_t}(s_1)}{1-\alpha(s_1-1)}\left[\Psi_X(s_2) - \alpha(s_1-1)\Psi_X\left(\frac{s_1s_2\alpha}{1+\alpha}\right)\right].
\end{align*}
\end{proof}

\begin{prop}\label{1_lag}
The 1-step ahead conditional mean and conditional variance are given by
\begin{align*}
\E(X_t\,|\,X_{t-1}) &= \alpha \left[1 - \left(\displaystyle\frac{\alpha}{1+\alpha}\right)^{X_{t-1}}\right] + \mu_\epsilon,\\
\Var(X_t\,|\,X_{t-1}) &=\alpha \left[1-\left(\frac{\alpha}{1+\alpha}\right)^{X_{t-1}}\right]
\left[1+\alpha\left(1+\left(\frac{\alpha}{1+\alpha}\right)^{X_{t-1}}\right)\right]-2\alpha X_{t-1}\left(\frac{\alpha}{1+\alpha}\right)^{X_{t-1}} + \sigma_\epsilon^2,
\end{align*}
respectively.
\end{prop}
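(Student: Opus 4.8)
The plan is to reduce the statement, via the additive structure of \eqref{mgwi} and the independence of $\epsilon_t$ from $(X_{t-1},Z_t)$, to computing the first two conditional moments of $\alpha\minop X_{t-1}$ given $X_{t-1}$. Since $\epsilon_t$ is independent of the $\sigma$-field generated by $X_{t-1}$ and $Z_t$,
\begin{align*}
\E(X_t\,|\,X_{t-1}) &= \E(\alpha\minop X_{t-1}\,|\,X_{t-1}) + \mu_\epsilon,\\
\Var(X_t\,|\,X_{t-1}) &= \Var(\alpha\minop X_{t-1}\,|\,X_{t-1}) + \sigma_\epsilon^2,
\end{align*}
so it remains to evaluate the right-hand sides and then substitute the values of $\mu_\epsilon$ and $\sigma_\epsilon^2$ already recorded above for the Geo-MGWI process.

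For the conditional law of $\alpha\minop X_{t-1}=\min(X_{t-1},Z_t)$ given $X_{t-1}=x$, I would work from the conditional pgf obtained in \eqref{gen_1}, namely $g_x(s):=\E\!\left(s^{\alpha\minop X}\,|\,X=x\right)=\dfrac{1-\alpha(s-1)\left[s\alpha/(1+\alpha)\right]^x}{1-\alpha(s-1)}$, and extract the factorial moments from its derivatives at $s=1$: $\E(\alpha\minop x)=g_x'(1)$ and $\E\big((\alpha\minop x)(\alpha\minop x-1)\big)=g_x''(1)$. (Equivalently, one may invoke Proposition \ref{op_mean} with $\Psi_X(s)=s^x$, the pgf of the point mass at $x$, for which $\Psi_X^{(k)}\!\big(\alpha/(1+\alpha)\big)=x(x-1)\cdots(x-k+1)\,(\alpha/(1+\alpha))^{x-k}$; or compute the moments directly via $\E(\min(x,Z)^n)=\sum_{k=0}^{x-1}\big((k+1)^n-k^n\big)\p(Z\ge k+1)$ and summing the resulting finite geometric-type series.) A short differentiation gives $g_x'(1)=\alpha\big[1-(\alpha/(1+\alpha))^x\big]$, which is exactly the claimed conditional mean once $\mu_\epsilon$ is added.

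For the conditional variance, the second factorial moment comes out as $g_x''(1)=2\alpha^2\big[1-(1+x/\alpha)(\alpha/(1+\alpha))^x\big]$, whence $\Var(\alpha\minop x)=g_x''(1)+g_x'(1)-\big(g_x'(1)\big)^2$; expanding $\big(g_x'(1)\big)^2$ and grouping the terms proportional to $(\alpha/(1+\alpha))^x$, to $(\alpha/(1+\alpha))^{2x}$, and to $x(\alpha/(1+\alpha))^x$ recovers the factored expression $\alpha\big[1-(\alpha/(1+\alpha))^x\big]\big[1+\alpha\big(1+(\alpha/(1+\alpha))^x\big)\big]-2\alpha x\,(\alpha/(1+\alpha))^x$. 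Adding $\sigma_\epsilon^2$ finishes the proof. The only genuinely delicate step is this last piece of bookkeeping — differentiating $g_x$ twice and rearranging $g_x''(1)+g_x'(1)-(g_x'(1))^2$ into the advertised product form — since everything else follows immediately from independence and from results established earlier in the paper.
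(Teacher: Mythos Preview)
Your proposal is correct and follows essentially the same route as the paper: decompose via independence of $\epsilon_t$, then compute the conditional moments of $\alpha\minop x$ by differentiating a pgf associated to the degenerate law at $x$. The paper cites Proposition~\ref{op_mean} with $\Psi_X(s)=s^x$ for this step, which you list as an equivalent alternative to working directly from the conditional pgf in \eqref{gen_1}; your write-up is simply more explicit about the variance computation, which the paper dismisses as ``analogous, details omitted.''
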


\begin{proof}
From the definition of the MGWI process, we obtain that
\begin{equation*}
\E(X_t\,|\,X_{t-1}=x) = \E(\alpha \minop X_{t-1} + \epsilon_t \,|\,X_{t-1}=x)
=\E(\alpha \minop X_{t-1}\,|\,X_{t-1}=x)+\mu_{\epsilon},
\end{equation*}
for all $x=0,1,\dots$. The conditional expectation above can be obtained from Proposition \ref{op_mean} with $X$ being a degenerate random variable at $x$ (i.e. $P(X=x)=1$). Then, it follows that
\begin{equation*}
\E(X_t\,|\,X_{t-1}=x)=\alpha \left[1-\left(\frac{\alpha}{1+\alpha}\right)^{x}\right] + \mu_\epsilon.
\end{equation*}

The conditional variance can be derived analogously, so details are omitted.
\end{proof}

\begin{remark}
Note that the conditional expectation and variance given in Proposition \ref{1_lag} are non-linear on $X_{t-1}$ in contrast with the classic GW/INAR processes where they are linear.
\end{remark}

\begin{prop}
The autocovariance and autocorrelation functions at lag 1 of the Geo-MGWI process are respectively given by
\begin{equation}
\gamma(1) \equiv \operatorname{Cov}(X_t, X_{t-1}) = \frac{\mu\alpha(1+\mu)(1+\alpha)}{(1+\mu+\alpha)^2}\quad \mbox{and}\quad \rho(1) \equiv \operatorname{Corr}\,(X_t, X_{t-1}) = \frac{\alpha(1+\alpha)}{(1+\mu+\alpha)^2}.
\end{equation}
\end{prop}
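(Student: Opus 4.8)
The plan is to exploit the additive structure of \eqref{mgwi} together with the stationarity established above, reducing the problem to one expectation that can be read off from Proposition \ref{1_lag}. Since $\epsilon_t$ is independent of $X_{t-1}$, bilinearity of the covariance gives
\begin{equation*}
\gamma(1) = \operatorname{Cov}(\alpha\minop X_{t-1} + \epsilon_t, X_{t-1}) = \operatorname{Cov}(\alpha\minop X_{t-1}, X_{t-1}) = \E\big(X_{t-1}\,(\alpha\minop X_{t-1})\big) - \E(X_{t-1})\,\E(\alpha\minop X_{t-1}).
\end{equation*}
By stationarity $X_{t-1}\sim\mathrm{Geo}(\mu)$, so $\E(X_{t-1})=\mu$ and $\Var(X_{t-1})=\mu(1+\mu)$; moreover the computation preceding Definition \ref{D:def_mgwi} shows $\alpha\minop X_{t-1}\sim\mathrm{Geo}\!\left(\alpha\mu/(1+\alpha+\mu)\right)$, hence $\E(\alpha\minop X_{t-1})=\alpha\mu/(1+\alpha+\mu)$.

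For the remaining cross term I would use the tower property together with the conditional mean from Proposition \ref{1_lag}, namely $\E(\alpha\minop X_{t-1}\mid X_{t-1}) = \alpha\big[1-(\alpha/(1+\alpha))^{X_{t-1}}\big]$, to write
\begin{equation*}
\E\big(X_{t-1}\,(\alpha\minop X_{t-1})\big) = \alpha\,\E(X_{t-1}) - \alpha\,\E\!\left(X_{t-1}\left(\tfrac{\alpha}{1+\alpha}\right)^{X_{t-1}}\right).
\end{equation*}
The last expectation equals $r\,\Psi_X'(r)$ with $r=\alpha/(1+\alpha)$ and $\Psi_X$ the geometric pgf in \eqref{X_pgf} with parameter $\mu$, because $s\Psi_X'(s)=\E(X s^X)$; since $\Psi_X'(s)=\mu/(1+\mu(1-s))^2$ and $1+\mu(1-r)=(1+\alpha+\mu)/(1+\alpha)$, one obtains $\E\!\left(X_{t-1}(\alpha/(1+\alpha))^{X_{t-1}}\right) = \alpha\mu(1+\alpha)/(1+\alpha+\mu)^2$.

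Substituting everything back gives
\begin{equation*}
\gamma(1) = \frac{\alpha\mu}{(1+\alpha+\mu)^2}\Big[(1+\alpha+\mu)^2 - \alpha(1+\alpha) - \mu(1+\alpha+\mu)\Big],
\end{equation*}
so the only genuine work left is the algebraic check that the bracket collapses to $(1+\alpha)(1+\mu)$, which yields the stated $\gamma(1)$; dividing by $\Var(X_{t-1})=\mu(1+\mu)$ then produces $\rho(1)=\alpha(1+\alpha)/(1+\alpha+\mu)^2$. The main obstacle is purely bookkeeping: keeping the substitution $r=\alpha/(1+\alpha)$ consistent inside the pgf derivative and verifying the cancellation in the bracket. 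There is no conceptual difficulty once the conditional mean of Proposition \ref{1_lag} is invoked.
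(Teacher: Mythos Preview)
Your proposal is correct and follows essentially the same route as the paper: both arguments reduce, via the tower property and the conditional mean from Proposition~\ref{1_lag}, to evaluating $\E\!\big(X_{t-1}(\alpha/(1+\alpha))^{X_{t-1}}\big)$ through the derivative of the geometric pgf. The only cosmetic difference is that you strip off $\epsilon_t$ at the outset by bilinearity of the covariance, whereas the paper carries the full conditional mean $\E(X_t\mid X_{t-1})$ (including $\mu_\epsilon$) through the computation of $\E(X_tX_{t-1})$; the resulting algebra is the same.
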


\begin{proof}
We have that $\operatorname{Cov}(X_t, X_{t-1}) = \E(X_tX_{t-1}) - \E(X_t)\E(X_{t-1})$, with
\begin{align*}
\E(X_tX_{t-1}) &= \E\left[\E(X_tX_{t-1}\,|\,X_{t-1})\right] = \E\left[X_{t-1}\E(X_t\,|\,X_{t-1})\right]\\
&= \alpha\E(X_{t-1}) - \alpha \E\left[X_{t-1}\left(\frac{\alpha}{1+\alpha}\right)^{X_{t-1}}\right] + \mu_\epsilon \E(X_{t-1})\\
&= \mu \alpha - \frac{\mu\alpha^2(1+\alpha)}{(1+\mu+\alpha)^2} + \frac{\mu^2(1+\mu)}{1+\mu+\alpha}.
\end{align*}
After some algebra, the result follows.
\end{proof}

In the following proposition, we obtain an expression for the conditional expectation $E(X_t|X_{t-k}=\ell)$. This function will be important to find the autocovariance function at lag $k\in\mathbb N$ and to perform prediction and/or forecasting.

\begin{prop}\label{cov_lag_k} For $\alpha>0$, define $h_j=\frac{(1+\alpha)^{j-1}}{(1+\alpha)^j-\alpha^j}$ and $g_j=\frac{\alpha(1+\alpha)^{j-1}-\alpha^j}{(1+\alpha)^j-\alpha^j}$, and the real functions
\begin{equation*}\label{coup}
 f_j(x)=
\Psi_{\epsilon_1}({\alpha_*}^{j-1})\left(h_j+g_jx\right), 
\end{equation*}
$j=2,3,\dots$, where $\alpha_*\equiv\frac{\alpha}{1+\alpha}$ and $x\in\mathbb{R}$. 
Finally, let $H_{k}(x)=f_2(\dots  (f_{k-1}(f_k(x))))$. Then, for all $\ell\in\mathbb N_*\equiv\mathbb N\cup\{0\}$, 
\begin{equation}\label{conc_p9}
E(X_t|X_{t-k}=\ell)=\alpha\left(1-H_k\left({\alpha_*}^{k\ell}\right)\right)+\mu_{\epsilon},
\end{equation} 
for all integer $k\geq 2$.
\end{prop}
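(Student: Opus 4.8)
The plan is to combine the Markov property of $\{X_t\}_{t\in\mathbb N}$ with the one-step conditional mean of Proposition \ref{1_lag}, packaging the contribution of each intermediate generation into a single factor $f_j$. Throughout, write $\alpha_*=\alpha/(1+\alpha)$, which lies in $(0,1)$ for every $\alpha>0$, so that every probability generating function below is evaluated inside its radius of convergence and every expectation of the form $E(\alpha_*^{\,cX})$ is a finite number in $(0,1]$; convergence is therefore never an issue. Recall also that the defining equation \eqref{mgwi}, together with the independence of $(Z_t,\epsilon_t)$ from the past, makes $\{X_t\}$ a Markov chain (its transition kernel is given in Proposition \ref{P:trans_prob}).

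First I would reduce the $k$-step conditional mean to a conditional generating function. By Proposition \ref{op_mean} applied to a degenerate $X$ (equivalently, by Proposition \ref{1_lag}), $E(X_t\mid X_{t-1}=x)=\alpha\big(1-\alpha_*^{\,x}\big)+\mu_\epsilon$ for every $x\ge 0$. Conditioning on $X_{t-1}$ and using the Markov property,
\begin{equation*}
E(X_t\mid X_{t-k}=\ell)=\alpha+\mu_\epsilon-\alpha\,E\!\left(\alpha_*^{\,X_{t-1}}\,\big|\,X_{t-k}=\ell\right),
\end{equation*}
so it suffices to evaluate the conditional partial pgf $E(\alpha_*^{\,X_{t-1}}\mid X_{t-k}=\ell)$; this is where the hypothesis $k\ge 2$ enters, since $X_{t-1}$ must be a strictly intermediate generation.

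The core of the argument is a single-generation identity: for $1\le j\le k-1$ and every $x\ge 0$,
\begin{equation*}
E\!\left(\big(\alpha_*^{\,j}\big)^{X_{t-j}}\,\big|\,X_{t-j-1}=x\right)=f_{j+1}\!\left(\alpha_*^{(j+1)x}\right).
\end{equation*}
I would prove this by substituting $X_{t-j}=\alpha\minop X_{t-j-1}+\epsilon_{t-j}$, factoring out $\Psi_{\epsilon_1}(\alpha_*^{\,j})$ by independence, and computing $E\big((\alpha_*^{\,j})^{\alpha\minop X_{t-j-1}}\mid X_{t-j-1}=x\big)$ from formula \eqref{gen_1} with $s_1=\alpha_*^{\,j}$; the outcome is affine in $\alpha_*^{(j+1)x}$. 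The remaining work is to check that its two coefficients are exactly $\Psi_{\epsilon_1}(\alpha_*^{\,j})h_{j+1}$ and $\Psi_{\epsilon_1}(\alpha_*^{\,j})g_{j+1}$, which rests on the bookkeeping identity $1-\alpha\big(\alpha_*^{\,j}-1\big)=\big((1+\alpha)^{j+1}-\alpha^{j+1}\big)/(1+\alpha)^{j}$ together with the analogous reduction of the numerator $\alpha(1-\alpha_*^{\,j})$. This coefficient-matching step is the one I expect to be the main obstacle, although it is purely computational; it is precisely what dictates the particular forms of $h_j$ and $g_j$ in the statement.

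Finally I would iterate. Set $\beta_j:=E\!\big((\alpha_*^{\,j})^{X_{t-j}}\mid X_{t-k}=\ell\big)$ for $1\le j\le k$, so that $\beta_k=\alpha_*^{\,k\ell}$ trivially and $\beta_1$ is exactly the quantity required in the reduction above. For $1\le j\le k-1$, conditioning on $X_{t-j-1}$ (legitimate by the Markov property, since $t-k\le t-j-1$), applying the single-generation identity, and using that $f_{j+1}$ is affine so that the conditional expectation passes through it, yields $\beta_j=f_{j+1}(\beta_{j+1})$. Unrolling this recursion gives $\beta_1=f_2\big(f_3(\cdots f_k(\alpha_*^{\,k\ell})\cdots)\big)=H_k(\alpha_*^{\,k\ell})$, and substituting into the reduction produces $E(X_t\mid X_{t-k}=\ell)=\alpha\big(1-H_k(\alpha_*^{\,k\ell})\big)+\mu_\epsilon$, which is the claim.
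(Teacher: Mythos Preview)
Your proof is correct and follows essentially the same approach as the paper: both reduce to iterating the one-step identity obtained from \eqref{gen_1} with $s_1=\alpha_*^{\,j}$, exploiting that each $f_j$ is affine so conditional expectations pass through it. The only difference is organizational: the paper runs a formal induction on $k$ (assuming the formula at lag $n-1$ and adding one generation), whereas you set up the recursion $\beta_j=f_{j+1}(\beta_{j+1})$ directly and unroll it, which is arguably tidier but amounts to the same computation.
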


\begin{proof} Let $\mathcal{F}_t=\sigma(X_1,\dots,X_t)$ denote the sigma-field generated by the random variables $X_1,\dots,X_t$. By the Markov property it is clear that 
\begin{equation}\label{tower}E(X_t|X_{t-k})=E(X_t|\mathcal{F}_{t-k})=E[E(X_t|\mathcal{F}_{t-k+1})|\mathcal{F}_{t-k}],\end{equation} 
for all $k\geq 1$. The proof proceeds by induction on $k$. Equation \eqref{tower} and Proposition \ref{1_lag} give us that
\begin{align*}
E(X_t|X_{t-2})&=E[E(X_t|X_{t-1})|X_{t-2}]=E\left[\alpha(1-{\alpha_*}^{X_{t-1}})+\mu_{\epsilon}|X_{t-2}\right]\\
&=\alpha[1-E(\alpha_*^{\alpha\minop X_{t-2}+\epsilon_{t-1}}|X_{t-2})]+\mu_{\epsilon}
=\alpha[1-\Psi_{\epsilon_{1}}(\alpha_*)E({\alpha_*}^{\alpha\minop X_{t-2}}|X_{t-2})]+\mu_{\epsilon},
\end{align*} 
with $\alpha_*=\frac{\alpha}{1+\alpha}$.
Using \eqref{gen_1}, we obtain that
\begin{align*}
E(X_t|X_{t-2}=\ell)&=\alpha[1-\Psi_{\epsilon_{1}}(\alpha_*)E({\alpha_*}^{\alpha\minop X_{t-2}}|X_{t-2}=\ell)]+\mu_{\epsilon}\\
&=\alpha\left[1-\Psi_{\epsilon_{1}}(\alpha_*)\frac{1-\alpha(\alpha_*-1)[\alpha_*\alpha/(1+\alpha)]^{\ell}}{1-\alpha(\alpha_*-1)}\right]+\mu_{\epsilon}\\
&=\alpha\left[1-\Psi_{\epsilon_{1}}(\alpha_*)\left(\frac{1+\alpha}{(1+\alpha)^2-\alpha^2}+\frac{\alpha(1+\alpha)-\alpha^2}{(1+\alpha)^2-\alpha^2}{\alpha_*}^{2\ell}\right)\right]+\mu_{\epsilon}=\alpha\left(1-H_2\left({\alpha_*}^{2\ell}\right)\right)+\mu_{\epsilon}.
\end{align*} 

Assume that \eqref{conc_p9} is true for $k=n-1$. Using \eqref{tower}, we have
$$E(X_t|X_{t-n})=E(X_t|\mathcal{F}_{t-n})=E[E(X_t|\mathcal{F}_{t-(n-1)})|\mathcal{F}_{t-n}]=\alpha\left(1-E\left(H_{n-1}\left({\alpha_*}^{(n-1)X_{t-(n-1)}}\right)\middle|X_{t-n}\right)\right)+ \mu_{\epsilon}.$$ 

From the definition of $H_n$, we obtain
$$E\left(H_{n-1}\left({\alpha_*}^{(n-1)X_{t-(n-1)}}\right)\middle|X_{t-n}=\ell\right)=f_2\left(\dots \left(f_{n-2}\left(E\left(f_{n-1}\left({\alpha_*}^{(n-1)X_{t-(n-1)}}\right)\middle|X_{t-n}=\ell\right)\right)\right)\right).$$

Note that 
\begin{align*}E\left(f_{n-1}\left({\alpha_*}^{(n-1)X_{t-(n-1)}}\right)|X_{t-n}=\ell\right)&=h_{n-1}+g_{n-1}E\left[{\alpha_*}^{(n-1)X_{t-(n-1)}}|X_{t-n}=\ell\right]\\
&=h_{n-1}+g_{n-1}E\left[{\alpha_*}^{(n-1)(\alpha\minop X_{t-n})}|X_{t-n}=\ell\right]\\
&=h_{n-1}+g_{n-1}\left[\Psi_{\epsilon_{1}}({\alpha_*}^{n-1})\frac{1-\alpha({\alpha_*}^{n-1}-1)[{\alpha_*}^{n-1}\alpha/(1+\alpha)]^{\ell}}{1-\alpha({\alpha_*}^{n-1}-1)}\right]\\
&=h_{n-1}+g_{n-1}\left[\Psi_{\epsilon_{1}}({\alpha_*}^{n-1})\left(\frac{(1+\alpha)^{n-1}}{(1+\alpha)^n-\alpha^n}+\frac{\alpha(1+\alpha)^{n-1}-\alpha^n}{(1+\alpha)^n-\alpha^n}{\alpha_*}^{n\ell}\right)\right]\\
&=f_{n-1}(f_n({\alpha_*}^{n\ell})).
\end{align*}

Therefore,  
$E\left(H_{n-1}\left({\alpha_*}^{(n-1)X_{t-(n-1)}}\right)\middle|X_{t-n}=\ell\right)=f_2(\dots (f_{n-1}(f_n({\alpha_*}^{nl}))))=H_n({\alpha_*}^{n\ell})$, and hence we get the desired expression
$E(X_t|X_{t-n}=\ell)=\alpha\left(1-H_n\left({\alpha_*}^{n\ell}\right)\right)+ \mu_{\epsilon}$, which completes the proof.
\end{proof}

\begin{prop} Let $h_j$, $g_j$ be as in Proposition \ref{cov_lag_k} and write $\tilde{h}_j=\mu h_j$, for  $j\in\mathbb N$. It holds that
$$\gamma(k):=Cov(X_t,X_{t-k})=\alpha\mu\left[1-H_k(G(\alpha,\mu,k))\right]+\mu\left(\mu_{\epsilon}-\mu\right),$$
where $G(\alpha,\mu,k)=\frac{\alpha_*}{k(1+\mu(1-{\alpha_*}^k))^2}$, and $H_k(\cdot)$ as defined in Proposition \ref{coup}, for $k\in\mathbb N$.
\end{prop}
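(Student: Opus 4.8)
The plan is to reduce the computation to the lag‑$k$ conditional mean already obtained in Proposition \ref{cov_lag_k}. Since $\{X_t\}$ is stationary with $\mathrm{Geo}(\mu)$ marginals, $\E(X_t)=\E(X_{t-k})=\mu$, so $\gamma(k)=\E(X_tX_{t-k})-\mu^2$ and it is enough to evaluate the mixed moment $\E(X_tX_{t-k})$. Conditioning on $X_{t-k}$ and using the tower property together with $\E(X_t\mid X_{t-k})=\alpha\bigl(1-H_k({\alpha_*}^{kX_{t-k}})\bigr)+\mu_\epsilon$ (Proposition \ref{cov_lag_k} for $k\ge2$; for $k=1$ the relation reduces to Proposition \ref{1_lag}, i.e.\ $H_1$ is the identity), I would write
\[
\E(X_tX_{t-k})=\E\left[X_{t-k}\,\E(X_t\mid X_{t-k})\right]=(\alpha+\mu_\epsilon)\E(X_{t-k})-\alpha\,\E\left[X_{t-k}\,H_k\!\left({\alpha_*}^{kX_{t-k}}\right)\right].
\]

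The decisive observation is that $H_k=f_2\circ f_3\circ\cdots\circ f_k$ is a composition of affine maps, hence itself affine: $H_k(x)=A_k+B_kx$ for constants $A_k,B_k$ depending only on $\alpha,\mu,k$ (their explicit values, read off from $h_j$, $g_j$ and $\Psi_{\epsilon_1}({\alpha_*}^{j-1})$, are not needed). Consequently $X_{t-k}H_k({\alpha_*}^{kX_{t-k}})=A_kX_{t-k}+B_kX_{t-k}{\alpha_*}^{kX_{t-k}}$, and taking expectations,
\[
\E\left[X_{t-k}\,H_k\!\left({\alpha_*}^{kX_{t-k}}\right)\right]=A_k\mu+B_k\E\left[X_{t-k}\,{\alpha_*}^{kX_{t-k}}\right]=\mu\,H_k\!\left(\mu^{-1}\,\E\left[X_{t-k}\,{\alpha_*}^{kX_{t-k}}\right]\right).
\]
The remaining expectation is computed from $X_{t-k}\sim\mathrm{Geo}(\mu)$ via the standard identity $\E(Xs^X)=s\,\Psi_X'(s)$ with $\Psi_X(s)=(1+\mu(1-s))^{-1}$, which gives $\E(Xs^X)=\mu s\,(1+\mu(1-s))^{-2}$; evaluating at $s={\alpha_*}^{k}$ yields $\mu^{-1}\E[X_{t-k}{\alpha_*}^{kX_{t-k}}]={\alpha_*}^{k}\,(1+\mu(1-{\alpha_*}^{k}))^{-2}$, which is precisely $G(\alpha,\mu,k)$. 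All sum/expectation interchanges above are legitimate because $0<{\alpha_*}^{k}<1$ and $\mathrm{Geo}(\mu)$ has finite moments of every order.

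Substituting back,
\[
\gamma(k)=(\alpha+\mu_\epsilon)\mu-\alpha\mu\,H_k(G(\alpha,\mu,k))-\mu^2=\alpha\mu\left[1-H_k(G(\alpha,\mu,k))\right]+\mu(\mu_\epsilon-\mu),
\]
which is the asserted identity. I do not anticipate a genuine difficulty here: the one step that truly matters is recognizing that $H_k$ is affine, since this is what justifies replacing $\E[X_{t-k}H_k({\alpha_*}^{kX_{t-k}})]$ by $\mu\,H_k(G)$; the rest is bookkeeping, plus a quick check that the $k=1$ case matches the earlier lag‑$1$ autocovariance. (This derivation also indicates that $G(\alpha,\mu,k)$ should read ${\alpha_*}^{k}/(1+\mu(1-{\alpha_*}^{k}))^2$, which coincides with the displayed expression when $k=1$.)
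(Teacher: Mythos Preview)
Your approach is essentially the same as the paper's: both exploit that $H_k$ is affine to pull the factor $X_{t-k}$ through the composition, the paper encoding this via the modified maps $\tilde f_j(x)=\Psi_{\epsilon_1}({\alpha_*}^{j-1})(\tilde h_j+g_jx)$ with $\tilde h_j=\mu h_j$ (so that $\tilde f_j(\mu x)=\mu f_j(x)$), while you write $H_k(x)=A_k+B_kx$ directly and observe $\E[X_{t-k}H_k(\cdot)]=\mu H_k(\mu^{-1}\E[X_{t-k}\,\cdot\,])$. Your final parenthetical is also correct: the identity $\E(Xs^X)=s\,\Psi_X'(s)$ at $s={\alpha_*}^{k}$ gives $\E[X_{t-k}{\alpha_*}^{kX_{t-k}}]=\mu\,{\alpha_*}^{k}/(1+\mu(1-{\alpha_*}^{k}))^2$, so the stated $G(\alpha,\mu,k)=\alpha_*/[k(1+\mu(1-{\alpha_*}^{k}))^2]$ is a typo for ${\alpha_*}^{k}/(1+\mu(1-{\alpha_*}^{k}))^2$.
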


\begin{proof}Note that
\begin{equation}\label{cov_1}
\gamma(k)=E(E(X_t X_{t-k}|X_{t-k}))-\mu^2=E(X_{t-k}E(X_t| X_{t-k}))-\mu^2=\alpha\left(\mu-E\left(X_{t-k}H_{k}\left({\alpha_*}^{kX_{t-k}}\right)\right)\right)+\mu(\mu_{\epsilon}-\mu),
\end{equation}
where the third equality follows by \eqref{conc_p9}.  A thorough inspection of the definition of $H_k$ gives
\begin{equation}\label{cov_2}
E\left(X_{t-k}H_k\left({\alpha_*}^{kX_{t-k}}\right)\right)=\tilde{f}_2\left(\dots\left(\tilde{f}_k\left(E\left(X_{t-k}{\alpha_*}^{kX_{t-k}}\right)\right)\right)\right),
\end{equation}
where we have defined $\tilde{f}_j(x)=\Psi_{\epsilon_1}({\alpha_*}^{j-1})\left(\tilde{h}_j+g_jx\right)$, for $j\in\mathbb N$.

Note that the argument of the function in \eqref{cov_2} is just a constant times the derivative of $\Psi_{X_1}(s)$ with respect to $s$ and evaluated at ${\alpha_*}$. More specifically, 
\begin{equation}\label{cov_3}
E\left(X_{t-k}{\alpha_*}^{kX_{t-k}}\right)=\frac{\alpha_*}{k}\Psi'_{X_1}({\alpha_*}^{k})=\mu\frac{\alpha_*}{k(1+\mu(1-{\alpha_*}^k))^2}=\mu G(\alpha,\mu,k).
\end{equation}

The second equality follows from \eqref{X_pgf}. Plugging \eqref{cov_3} in \eqref{cov_2}, we obtain
\begin{equation}\label{cov_4}
E\left(X_{t-k}H_k\left({\alpha_*}^{kX_{t-k}}\right)\right)=\tilde{f}_2(\dots(\tilde{f}_k (\mu G(\alpha,\mu,k))))=\mu f_2\left(\dots\left({f}_k \left(G\left(\alpha,\mu,k\right)\right)\right)\right)=\mu H_k(G(\alpha,\mu,k)).
\end{equation}

The result follows by plugging \eqref{cov_4} in \eqref{cov_1}. 
\end{proof}

\section{Parameter estimation}\label{sec:inference}

In this section, we discuss estimation procedures for the geometric MGWI process through conditional least squares (CLS) and maximum likelihood methods. We assume that $X_1,\ldots,X_n$ is a trajectory from the Geo-MGWI model with observed values $x_1,\ldots,x_n$, where $n$ stands for the sample size. We denote the parameter vector by $\boldsymbol\theta \equiv (\mu, \alpha)^\top$.

For the CLS method, we define the function $Q_n(\boldsymbol\theta)$ as
\begin{equation}\label{cls}
Q_n(\boldsymbol\theta) \equiv \sum_{t=2}^n\left\lbrace x_t-\E(X_t\,|\,X_{t-1}=x_{t-1})\right\rbrace^2 =  \sum_{t=2}^n\left\lbrace x_t-\alpha\left[1-\left(\frac{\alpha}{1+\alpha}\right)^{x_{t-1}}\right] - \frac{\mu(1+\mu)}{1+\mu+\alpha}\right\rbrace^2.
\end{equation}

The CLS estimators are obtained as the argument that minimizes $Q_n(\boldsymbol\theta)$, i.e.
\begin{equation}\label{E:thetacls}
\hat{\boldsymbol{\theta}}_{cls} = \argmin_{\boldsymbol\theta} Q_n(\boldsymbol\theta).
\end{equation}
Since we do not have an explicit expression for $\hat{\boldsymbol{\theta}}_{cls}$, numerical optimization methods are required to solve~\eqref{E:thetacls}. This can be done through optimizer packages implemented in softwares such as \texttt{R}~\citep{r2021} and \texttt{MATLAB}. The gradient function associated with $Q_n(\cdot)$ can be provided for these numerical optimizers and is given by
\begin{align*}
\frac{\partial Q_n(\boldsymbol\theta)}{\partial \mu} &= -2\left[1 - \frac{\alpha\left(1+\alpha\right)}{(1+\mu+\alpha)^2}\right] \sum_{t=2}^n \left[x_t - \alpha\left(1-\left(\frac{\alpha}{1+\alpha}\right)^{x_{t-1}}\right) - \frac{\mu(1+\mu)}{1+\mu+\alpha}\right] \quad \text{and}\\
 \frac{\partial Q_n(\boldsymbol\theta)}{\partial \alpha} &= -2 \sum_{t=2}^n \left[x_t - \alpha\left(1-\left(\frac{\alpha}{1+\alpha}\right)^{x_{t-1}}\right) - \frac{\mu(1+\mu)}{1+\mu+\alpha}\right]\left[1 - \left(\frac{\alpha}{1+\alpha}\right)^{x_{t-1}}\left(1+\frac{x_{t-1}}{1+\alpha}\right)  - \frac{\mu(1+\mu)}{(1+\mu+\alpha)^2}\right].
\end{align*}

A strategy to get the standard errors of the CLS estimates based on bootstrap is proposed and illustrated in our empirical illustrations; please see Section \ref{sec:application}. 
 
We now discuss the maximum likelihood estimation (MLE) method. Note that our proposed Geo-MGWI process is a Markov chain (by definition) and therefore the likelihood function can be expressed in terms of the 1-step transition probabilities derived in Proposition~\ref{P:trans_prob}. The MLE estimators are obtained as the argument that maximizes the log-likelihood function, that is, $\hat{\boldsymbol{\theta}}_{mle} = \argmax_{\boldsymbol\theta} \ell_n(\boldsymbol\theta)$, with 
\begin{align}\label{E:loglik}
\ell_n(\boldsymbol\theta) &= \sum_{t=2}^n \log \p(X_t = x_t \,|\,X_{t-1} = x_{t-1})+\log\p(X_1=x_1), 
\end{align}
where the conditional probabilities in $\eqref{E:loglik}$ are given by~\eqref{E:trans_prob} and $\p(X_1=x_1)$ is the probability function of a geometric distribution with mean $\mu$. There is no closed-form expression available for $\hat{\boldsymbol{\theta}}_{mle}$. The maximization of~\eqref{E:loglik} can be accomplished through numerical methods such as the Broyden-Fletcher-Goldfarb-Shanno (BFGS) algorithm implemented in the \texttt{R} package \texttt{optim}. The standard errors of the maximum likelihood estimates can be obtained by using the Hessian matrix associated with (\ref{E:loglik}), which can be evaluated numerically.

In the remaining of this section, we examine and compare the finite-sample behavior of the CLS and MLE methods via Monte Carlo (MC) simulation with 1000 replications per set of parameter configurations, with the parameter estimates computed under both approaches. All the numerical experiments presented in this paper were carried out using the \texttt{R} programming language.

\begin{figure}[!h]
	\centering		
	\includegraphics[scale=0.6]{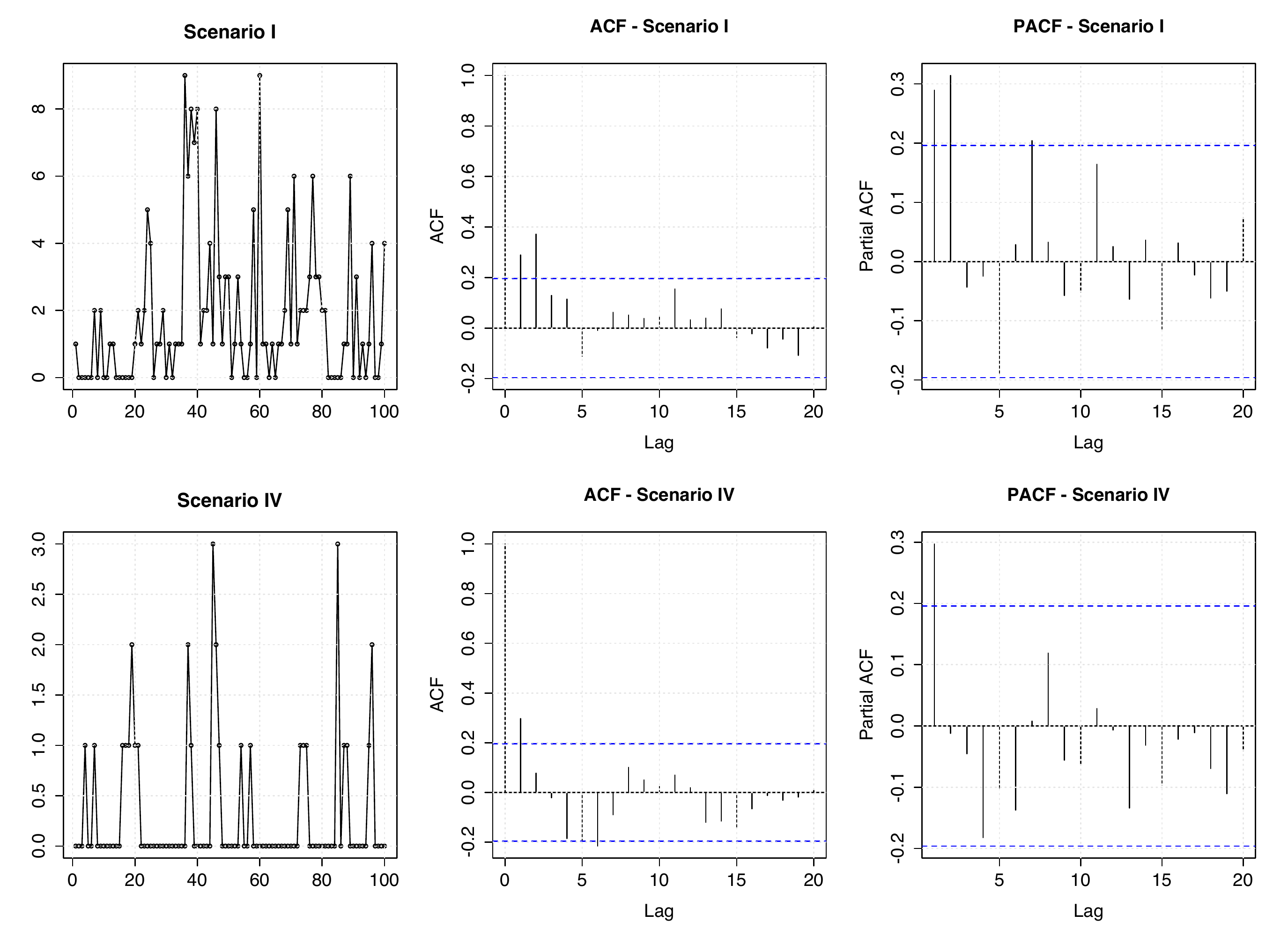}
	\caption{Sample paths for the Geo-MGWI process and their respective ACF and PACF under Scenarios I (top row) and IV (bottom row) with $n=100$.} \label{F:paths_acf_pacf}
\end{figure}

\begin{table}[!h]
	\centering
	\small
	\caption{Empirical mean and RMSE (within parentheses) of the parameter estimates based on the MLE and CLS methods for the Geo-MGWI process under the Scenarios I, II, III, and IV, and for sample sizes $n=100,200,500,1000$.}\label{T:simresults1}\vskip1mm
	\renewcommand{\arraystretch}{1.1}
	\resizebox{\columnwidth}{!}{%
		\tabcolsep=27pt
		\small
		\begin{tabular}{lcccc}
			\toprule	
			\multicolumn{5}{l}{\textbf{Scenario I}: $\mu = 2.0, \,\alpha = 1.0$}\\
			\midrule 													
			$n$ & $\hat\mu_{mle}$ & $\hat\alpha_{mle}$ & $\hat\mu_{cls}$ & $\hat\alpha_{cls}$ \\
			\midrule	
			100 & 1.996 (0.281) & 0.957 (0.425) & 1.999 (0.282) & 0.962 (0.698) \\
			200 & 1.995 (0.197) & 1.021 (0.290) & 1.998 (0.197) & 1.059 (0.536) \\
			500 & 2.013 (0.124) & 0.987 (0.177) & 2.014 (0.125) & 0.991 (0.339) \\
			1000 & 1.998 (0.088) & 0.998 (0.128) & 1.998 (0.088) & 0.988 (0.238) \\
			\midrule
			\multicolumn{5}{l}{\textbf{Scenario II}: $\mu = 1.2, \,\alpha = 0.5$}\\
			\midrule 													
			$n$ & $\hat\mu_{mle}$ & $\hat\alpha_{mle}$ & $\hat\mu_{cls}$ & $\hat\alpha_{cls}$ \\
			\midrule
			100 & 1.206 (0.181) & 0.486 (0.289) & 1.208 (0.182) & 0.556 (0.482) \\
			200 & 1.197 (0.128) & 0.491 (0.205) & 1.198 (0.129) & 0.495 (0.327) \\
			500 & 1.196 (0.082) & 0.498 (0.119) & 1.196 (0.082) & 0.490 (0.197) \\
			1000 & 1.200 (0.058) & 0.506 (0.090) & 1.200 (0.058) & 0.494 (0.143) \\
			\midrule
			\multicolumn{5}{l}{\textbf{Scenario III}: $\mu = 0.5, \,\alpha = 1.5$}\\
			\midrule 													
			$n$ & $\hat\mu_{mle}$ & $\hat\alpha_{mle}$ & $\hat\mu_{cls}$ & $\hat\alpha_{cls}$ \\
			\midrule
			100 & 0.499 (0.130) & 1.515 (0.523) & 0.498 (0.132) & 1.487 (0.831) \\
			200 & 0.499 (0.091) & 1.514 (0.387) & 0.498 (0.093) & 1.495 (0.595) \\
			500 & 0.496 (0.058) & 1.490 (0.236) & 0.496 (0.059) & 1.472 (0.356) \\
			1000 & 0.500 (0.042) & 1.502 (0.174) & 0.500 (0.044) & 1.524 (0.299) \\
			\midrule
			\multicolumn{5}{l}{\textbf{Scenario IV}: $\mu = 0.3, \,\alpha = 0.5$}\\
			\midrule 													
			$n$ & $\hat\mu_{mle}$ & $\hat\alpha_{mle}$ & $\hat\mu_{cls}$ & $\hat\alpha_{cls}$ \\
			\midrule
			100 & 0.298 (0.078) & 0.506 (0.271) & 0.298 (0.078) & 0.504 (0.340) \\
			200 & 0.296 (0.057) & 0.491 (0.186) & 0.297 (0.057) & 0.492 (0.244) \\
			500 & 0.299 (0.037) & 0.496 (0.120) & 0.300 (0.037) & 0.504 (0.157) \\
			1000 & 0.299 (0.026) & 0.499 (0.087) & 0.299 (0.026) & 0.500 (0.110) \\
			\bottomrule
	\end{tabular}}
\end{table}

We consider four simulation scenarios with different values for $\boldsymbol\theta = (\mu, \alpha)^\top$, namely: (I) $\boldsymbol\theta = (2.0, 1.0)^\top$, (II) $\boldsymbol\theta = (1.2, 0.5)^\top$, (III) $\boldsymbol\theta = (0.5, 1.5)^\top$, and (IV) $\boldsymbol\theta = (0.3,0.5)^\top$. To illustrate these configurations, we display in Figure~\ref{F:paths_acf_pacf} simulated trajectories from the Geo-MGWI process and their associated ACF and PACF under Scenarios I and IV. In Table~\ref{T:simresults1}, we report the empirical mean and root mean squared error (RMSE) of the parameter estimates obtained from the MC simulation based on the MLE and CLS methods. We can observe that both approaches produce satisfactory results and also a slight advantage of the MLE estimators over the CLS for estimating $\alpha$, mainly in terms of RMSE, which is already expected. This advantage can also be seen from Figure~\ref{F:boxplots_s1_s4}, which presents boxplots of the parameter estimates for $\mu$ and $\alpha$ under the Scenarios I and IV with sample sizes $n = 100$ and $n = 500$. In general, the estimation procedures considered here produce estimates with bias and RMSE decreasing towards zero as the sample size increases, therefore giving evidence of consistency.

\begin{figure}[!h]
	\centering		
	\includegraphics[scale=0.6]{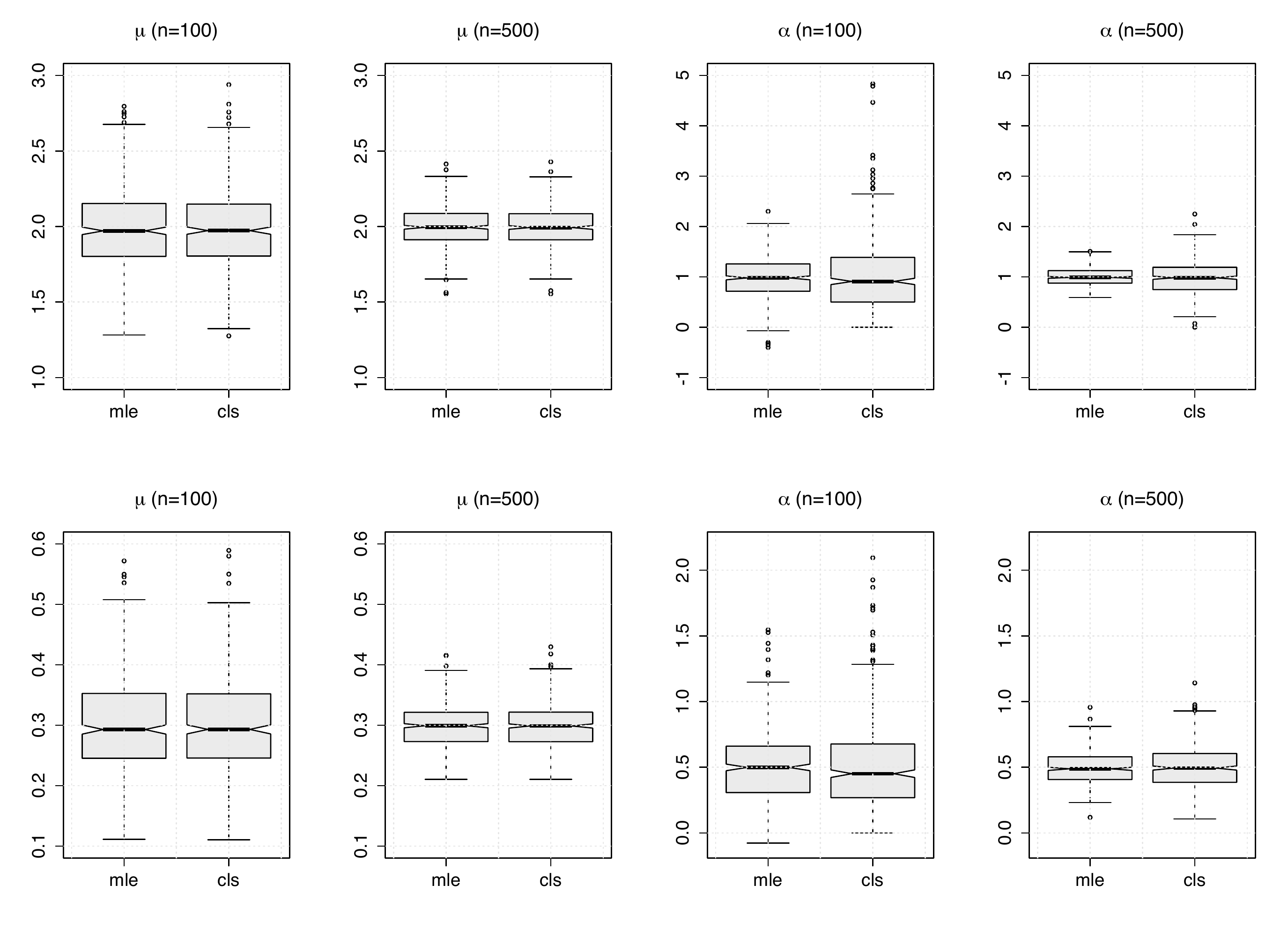}
	\caption{Boxplots of the MLE and CLS estimates for the Geo-MGWI process under the Scenarios I (top row) and IV (bottom row), and for sample sizes $n=100,500$.} \label{F:boxplots_s1_s4}
\end{figure}

\section{Dealing with non-stationarity}\label{sec:nonstat}

In many practical situations, stationarity can be a non-realistic assumption; for instance, see \cite{bra1995}, \cite{encetal2009}, and \cite{wan2020} for works that investigate non-stationary Poisson INAR process. Motivated by that, in this section, we propose a non-stationary version of the Geo-MGWI process allowing for time-varying parameters. Consider 
\begin{eqnarray*}
\mu_t=\exp({\bf w}_t^\top\boldsymbol\beta)\quad\mbox{and}\quad \alpha_t=\exp({\bf v}_t^\top\boldsymbol\gamma),
\end{eqnarray*}
where ${\bf w}_t$ and ${\bf v}_t$ are $p\times 1$ and $q\times 1$ covariate vectors for $t\geq 1$, and $\boldsymbol\beta$ and $\boldsymbol\gamma$ are $p\times 1$ and $q\times 1$ vectors of associated regression coefficients.

We define a time-varying or non-stationary Geo-MGWI process by 
\begin{eqnarray}\label{nonstatgeo}
X_t=\alpha_t\minop X_{t-1}+\epsilon_t, \,\,\,t=2,3,\ldots,
\end{eqnarray}
and $X_1\sim\mbox{Geo}(\mu_1)$, where $\alpha_t\minop X_{t-1}=\min(X_{t-1},Z_t)$, $\{Z_t\}_{t\in\mathbb{N}}$ is an independent sequence with $Z_t\sim\mbox{Geo}(\alpha_t)$, $\{\epsilon_t\}_{t\geq 1}$ are independent random variables with $\epsilon_t\sim\mbox{ZMG}\left(\dfrac{\alpha_t}{1+\alpha_t+\mu_t},\mu_t\right)$, for $t\geq2$. It is also assumed that $\epsilon_t$ is independent of $X_{t-l}$ and $Z_{t-l+1}$, for all $l\geq 1$. Under these assumptions, the marginals of the process (\ref{nonstatgeo}) are $\mbox{Geo}(\mu_t)$ distributed, for $t\in\mathbb N$.

We consider two estimation methods for the parameter vector $\boldsymbol\theta=(\boldsymbol\beta,\boldsymbol\gamma)^\top$. The first one is based on the conditional least squares. The CLS estimator of $\boldsymbol\theta$ is obtained by minimizing (\ref{cls}) with $\mu_t$ and $\alpha_t$ instead of $\mu$ and $\alpha$, respectively. According to \cite{wan2020}, this procedure might not be accurate in the sense that non-significant covariates can be included in the model. In that paper, a penalized CLS (PCLS) method is considered. Hence, a more accurate estimator is obtained by minimizing $\widetilde Q_n(\boldsymbol\theta)=Q_n(\boldsymbol\theta)+n\sum_{j=1}^{p+q} P_\delta(|\theta_i|)$, where $P_\delta(\cdot)$ is a penalty function and $\delta$ is a tuning parameter. See \cite{wan2020} for possible choices of penalty function. This can be used as a selection criterion and we hope to explore it in a future paper. A second method for estimating the parameters is the maximum likelihood method. The log-likelihood function assumes the form (\ref{E:loglik}) with $\mu$ and $\alpha$ replaced by $\mu_t$ and $\alpha_t$, respectively.

For the non-stationary case, we carry out a second set of Monte Carlo simulations by considering trend and seasonal covariates in the model as follows: $$\mu_t = \exp(\beta_0 + \beta_1 t/n + \beta_2 \cos (2\pi t/12)) \quad \mbox{and} \quad \alpha_t = \exp(\gamma_0 + \gamma_1 t/n),$$ for $t = 1, \ldots, n$. The above structure aims to mimic realistic situations when dealing with epidemic diseases. We here set the following scenarios: (V) $(\beta_0, \beta_1, \beta_2, \gamma_0, \gamma_1) = (2.0, 1.0, 0.7, 2.0, 1.0)$ and (VI) $(\beta_0, \beta_1, \beta_2, \gamma_0, \gamma_1) = (3.0, 1.0, 0.5, 3.0, 2.0)$.  We consider 500 Monte Carlo replications and the sample sizes $n = 100, 200, 500, 1000$. Table~\ref{T:simresults2} reports the empirical mean and the RMSE (within parentheses) of the parameter estimates based on the MLE and CLS methods. We can observed that the MLE method outperforms the CLS method for all configurations considered, as expected since we are generating time series data from the ``true" model. This can be also seen from Figure \ref{F:boxplots_s5}, which presents the boxplots of MLE and CLS estimates under the Scenarios V with sample sizes $n=200,500$. Regardless, note that the bias and RMSE of the CLS estimates decrease as the sample size increases.

\begin{table}[!h]
  \centering
  \small
  \caption{Empirical mean and RMSE (within parentheses) of the parameter estimates based on the MLE and CLS methods for the non-stationary Geo-MGWI process under the Scenarios V and VI, and for sample sizes $n=100,200,500,1000$.}\label{T:simresults2}\vskip1mm
  \renewcommand{\arraystretch}{1.2}
  \begin{tabular}{llllcccccccccc}
  \toprule
  \multicolumn{14}{l}{\textbf{Scenario V}\,\, $(\beta_0, \beta_1, \beta_2, \gamma_0, \gamma_1) = (2.0, 1.0, 0.7, 2.0, 1.0)$} \\
  \midrule
   &&&& $\hat\beta_0$ && $\hat\beta_1$ && $\hat\beta_2$ && $\hat\gamma_0$ && $\hat\gamma_1$ \\
   \cmidrule{5-14}
   $n = 100$ && MLE && 1.969 (0.278) && 1.006 (0.473) && 0.662 (0.209) && 1.939 (0.635) && 1.027 (0.741)\\
   			 && CLS && 1.398 (2.425) && 1.373 (1.997) && 0.914 (1.317) && 1.777 (1.649) && 0.469 (2.531)\\
   \midrule
   $n = 200$ && MLE && 1.984 (0.206) && 0.987 (0.337) && 0.677 (0.138) && 1.986 (0.453) && 0.963 (0.503)\\
			 && CLS && 1.743 (1.212) && 1.117 (1.005) && 0.813 (0.725) && 1.783 (1.295) && 0.732 (1.557)\\
   \midrule
   $n = 500$ && MLE && 1.993 (0.126) && 1.007 (0.209) && 0.673 (0.091) && 2.005 (0.168) && 1.003 (0.272)\\ 
             && CLS && 1.923 (0.317) && 1.059 (0.376) && 0.706 (0.241) && 1.864 (0.622) && 1.184 (1.000)\\
   \midrule
   $n = 1000$ && MLE && 1.996 (0.083) && 1.006 (0.136) && 0.674 (0.064) && 2.012 (0.116) && 0.999 (0.200)\\ 
             && CLS && 1.929 (0.265) && 1.061 (0.284) && 0.696 (0.205) && 1.869 (0.461) && 1.230 (0.665)\\          
  \midrule
  \multicolumn{14}{l}{\textbf{Scenario VI}\,\, $(\beta_0, \beta_1, \beta_2, \gamma_0, \gamma_1) = (3.0, 1.0, 0.5, 3.0, 2.0)$ }\\
  \midrule
   &&&& $\hat\beta_0$ && $\hat\beta_1$ && $\hat\beta_2$ && $\hat\gamma_0$ && $\hat\gamma_1$ \\
   \cmidrule{5-14}
   $n = 100$ && MLE && 2.967 (0.296) && 0.965 (0.554) && 0.506 (0.221) && 2.935 (0.376) && 2.049 (0.629)\\ 
             && CLS && 2.283 (1.732) && 1.264 (1.960) && 0.704 (1.098) && 3.231 (1.290) && 1.291 (2.048)\\   
   \midrule
   $n = 200$ && MLE && 2.981 (0.198) && 0.995 (0.363) && 0.484 (0.152) && 2.996 (0.249) && 1.998 (0.435)\\
             && CLS && 2.357 (1.390) && 1.310 (1.410) && 0.601 (0.884) && 3.561 (1.165) && 1.106 (1.855)\\
   \midrule
   $n = 500$ && MLE && 2.996 (0.121) && 0.988 (0.226) && 0.484 (0.093) && 3.008 (0.147) && 1.980 (0.264)\\
             && CLS && 2.570 (0.950) && 1.279 (0.912) && 0.641 (0.526) && 3.370 (0.886) && 1.481 (1.354)\\
   \midrule          
   $n = 1000$ && MLE && 2.998 (0.083) && 1.004 (0.156) && 0.477 (0.067) && 2.999 (0.099) && 2.009 (0.184)\\
             && CLS && 2.697 (0.623) && 1.192 (0.597) && 0.601 (0.452) && 3.244 (0.714) && 1.737 (1.026) \\          
  \bottomrule 
  \end{tabular}
\end{table}

\begin{figure}[!h]
	\centering		
	\includegraphics[scale=0.55]{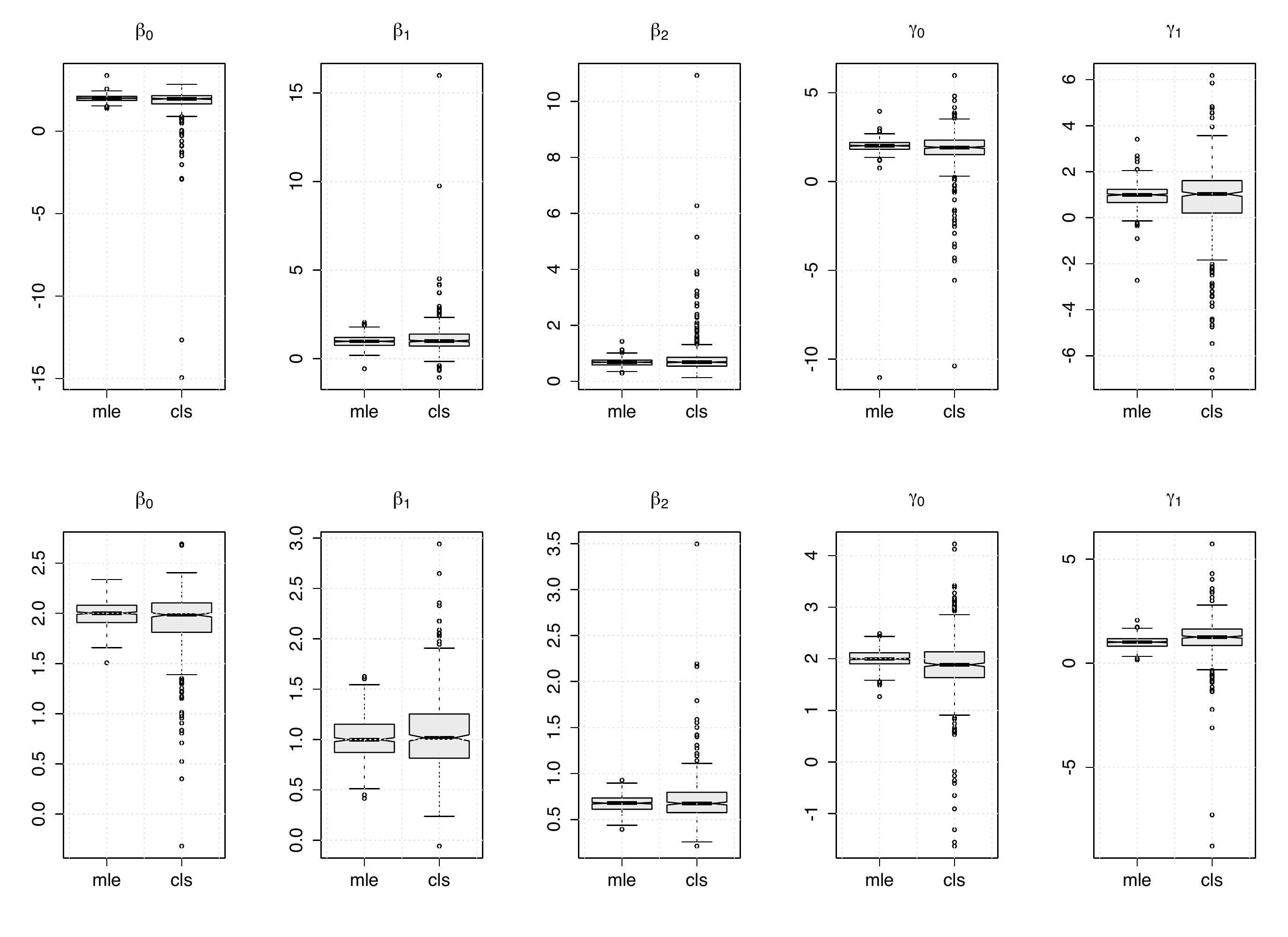}
	\caption{Boxplots of MLE and CLS estimates for the non-stationary Geo-MGWI process under the Scenarios V with sample sizes $n = 200$ (top row) and $n = 500$ (bottom row).} \label{F:boxplots_s5}
\end{figure}

\section{Real data applications}\label{sec:application}

In this section, we discuss the usefulness of our methodology under stationary and non-stationary conditions. In the first empirical example, we consider the monthly number of polio cases reported to the U.S. Centers for Disease Control and Prevention from January 1970 to December 1983, with 168 observations. The data were obtained through the \texttt{gamlss} package in \texttt{R}. Polio (or poliomyelitis) is a disease caused by \textit{poliovirus}. Symptoms associated with polio can vary from mild flu-like symptoms to paralysis and possibly death, mainly affecting children under 5 years of age. The second example concerns the monthly number of Hansen's disease cases in the state of Paraíba, Brazil, reported by DATASUS - Information Technology Department of the Brazilian Public Health Care System (SUS), from January 2001 to December 2020, totalizing 240 observations. Hansen's disease (or leprosy) is a curable infectious disease that is caused by \textit{M. leprae}. It mainly affects the skin, the peripheral nerves mucosa of the upper respiratory tract, and the eyes. According to the World Health Organization, about 208000 people worldwide are infected with Hansen's disease. The data are displayed in Table~\ref{T:hans_data}.

\begin{table}[!h]
	\centering
	\footnotesize
	\caption{The monthly cases of Hansen's disease in the state of Paraíba, Brazil.}\label{T:hans_data}
	\begin{tabular}{rrrrrrrrrrrrr}
		\toprule
		& Jan & Feb & Mar & Apr & May & Jun & Jul & Aug & Sep & Oct & Nov & Dec \\ 
		\midrule
		\texttt{2001} &  60 &  58 &  91 &  72 &  94 &  52 &  54 &  78 &  64 & 111 &  81 &  70 \\ 
  		\texttt{2002} &  55 &  72 &  71 &  70 &  61 &  51 &  80 &  82 &  97 & 107 & 142 &  81 \\ 
  		\texttt{2003} &  92 & 106 & 126 &  78 &  86 &  69 &  91 &  91 &  64 &  83 &  83 &  55 \\ 
  		\texttt{2004} &  67 &  82 & 121 &  84 & 102 &  77 &  83 & 102 &  77 &  59 &  86 &  67 \\ 
  		\texttt{2005} &  59 &  86 &  84 & 102 &  75 &  57 &  82 & 126 & 107 & 123 & 138 &  94 \\ 
  		\texttt{2006} &  88 &  78 & 105 &  91 & 106 &  68 &  85 & 106 &  95 &  80 & 101 &  67 \\ 
  		\texttt{2007} &  78 &  81 &  96 &  68 &  94 &  67 &  66 &  88 &  71 &  84 &  74 &  64 \\ 
  		\texttt{2008} &  79 &  75 &  66 &  81 &  74 &  45 &  82 &  91 &  85 &  74 &  77 &  61 \\ 
  		\texttt{2009} &  53 &  79 & 105 &  81 &  68 &  67 &  64 &  73 &  75 &  76 &  85 &  48 \\ 
  		\texttt{2010} &  51 &  74 &  94 &  64 &  60 &  51 &  54 &  70 &  69 &  68 &  64 &  43 \\ 
  		\texttt{2011} &  66 &  67 &  83 &  77 &  71 &  67 &  58 &  90 &  73 &  59 &  78 &  72 \\ 
  		\texttt{2012} &  71 &  82 &  80 &  64 &  82 &  60 &  83 &  77 &  76 &  60 &  49 &  52 \\ 
  		\texttt{2013} &  54 &  53 &  80 &  83 &  52 &  52 &  79 &  61 &  71 &  61 &  78 &  47 \\ 
  		\texttt{2014} &  61 &  79 &  51 &  63 &  51 &  45 &  61 &  63 &  83 &  63 &  60 &  40 \\ 
  		\texttt{2015} &  64 &  53 &  79 &  43 &  55 &  47 &  48 &  66 &  48 &  48 &  46 &  48 \\ 
  		\texttt{2016} &  39 &  43 &  54 &  34 &  50 &  38 &  38 &  67 &  35 &  44 &  48 &  41 \\ 
  		\texttt{2017} &  40 &  46 &  54 &  43 &  43 &  53 &  45 &  68 &  65 &  44 &  58 &  47 \\ 
  		\texttt{2018} &  64 &  42 &  72 &  62 &  51 &  42 &  43 &  64 &  47 &  48 &  76 &  40 \\ 
  		\texttt{2019} &  63 &  70 &  56 &  54 &  59 &  51 &  60 &  65 &  80 &  85 &  65 &  49 \\ 
  		\texttt{2020} &  57 &  62 &  61 &  16 &  21 &  19 &  35 &  25 &  60 &  63 &  51 &  30 \\
  		\texttt{2021} &  35 &  53 &  56 &  41 &  44 &  41 &  32 &  33 &  17 &   5 &   5 &   5 \\ 
  		\bottomrule
	\end{tabular}
\end{table} 

\subsection{Polio data analysis}

We begin the analysis of the polio data by providing plots of the observed time series and the corresponding sample ACF and PACF plots in Figure~\ref{F:polio_data}. These plots give us evidence that the count time series is stationary. Table~\ref{T:desc_polio} provides a summary of the polio data with  descriptive statistics, including mean, median, variance, skewness, and kurtosis. From the results in Table~\ref{T:desc_polio}, we can observe that counts vary between 0 and 14, with the sample mean and variance equal to 1.333 and 3.505, respectively, which suggests overdispersion of the data. 
 
\begin{figure}[!h]
	\centering		
	\includegraphics[scale=0.5]{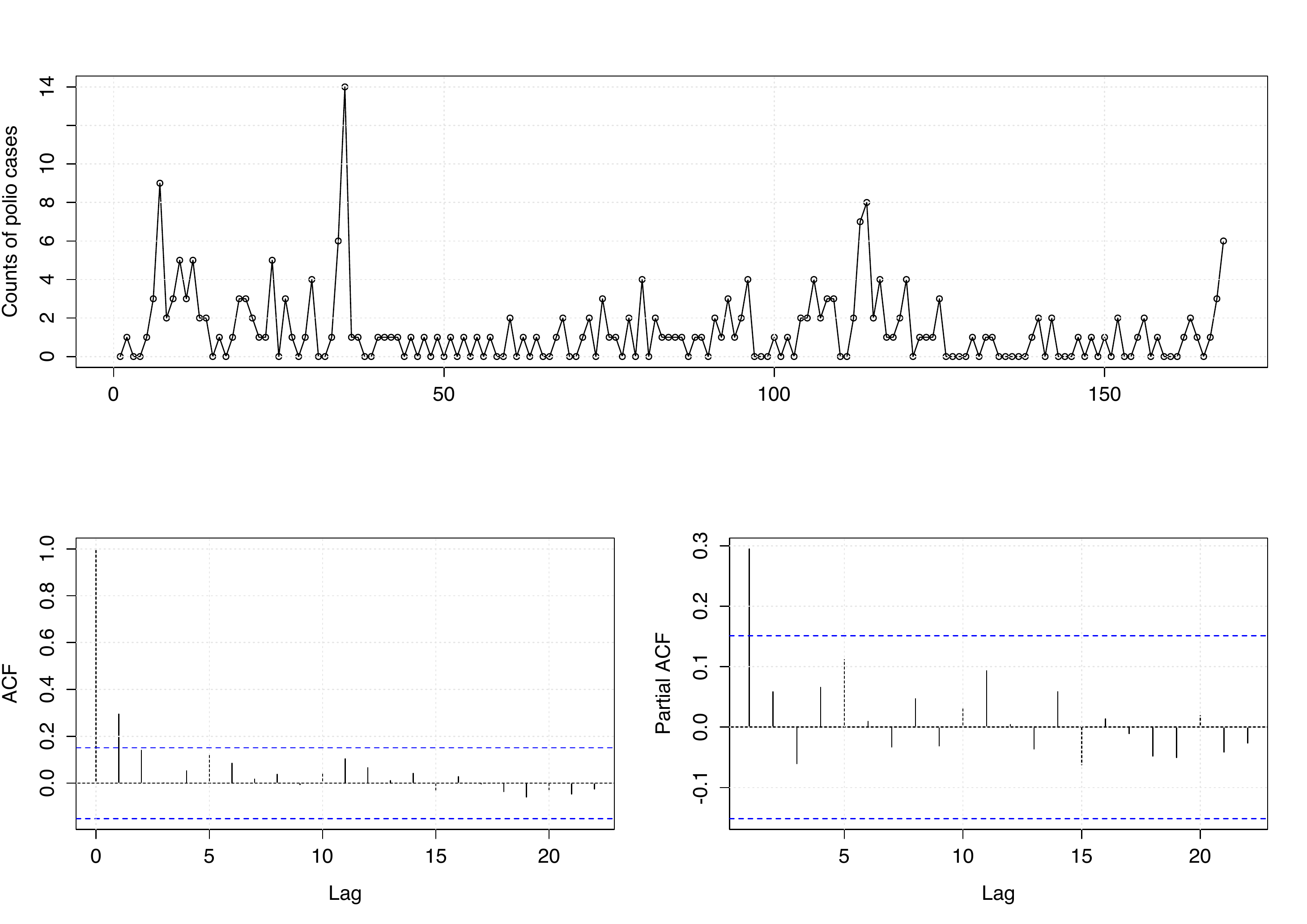}
	\caption{Polio data (top panel) and corresponding autocorrelation function (bottom left panel) and partial autocorrelation function (bottom right panel).} \label{F:polio_data}
\end{figure} 

\begin{table}[!h]
\centering
\small
\caption{Descriptive statistics of the polio data.} \label{T:desc_polio}
 \begin{tabular}{cccccccccccccc}
 \toprule
 Minimum && Maximum && Mean && Median && Variance && Skewness && Kurtosis\\
 \midrule
 0 && 14 && 1.333 && 1.000 && 3.505 && 3.052 && 16.818\\
 \bottomrule
 \end{tabular}
\end{table}

For comparison purposes, we consider the classic first-order GWI/INAR process with $E(X_t|X_{t-1})=\alpha X_{t-1}+\mu(1-\alpha)$. This linear conditional expectation on $X_{t-1}$ holds for the classic stationary INAR processes such as the binomial thinning-based ones, in particular, the Poisson INAR(1) model by \cite{alzalo1987}. The aim is to evaluate the effect of the nonlinearity of our proposed models on the prediction in comparison to the classic GWI/INAR(1) processes.

We consider the CLS estimation procedure, where just the conditional expectation is considered. This allows for a more flexible approach since no further assumptions are required. To obtain the standard errors of the CLS estimates, we consider a parametric bootstrap where some model satisfying the specific form for the conditional expectation holds. In this first application, for our MGWI process, we consider the geometric model derived in Section \ref{sec:gwi_process}. For the classic INAR, the Poisson model by \cite{alzalo1987} is considered in the bootstrap approach. This strategy to get standard errors has been considered, for example, by \cite{maietal2021} for a class of semiparametric time series models driven by a latent factor. In order to compare the predictive performance of the competing models, we compute the sum of squared prediction errors (SSPE) defined by $\text{SSPE} = \sum_{t = 2}^n(x_t - \hat\mu_t)^2$, 
where $\hat\mu_t = \widehat E(X_t|X_{t-1})$ is the predicted mean at time $t$, for $t = 2, \ldots, n$. Table~\ref{T:results_polio} summarizes the fitted models by providing CLS estimates and their respective standard errors, and the SSPE values. The SSPE results in Table~\ref{T:results_polio} show the superior performance of the MGWI process over the classic GWI/INAR process in terms of prediction. This can also be observed from Figure~\ref{F:app_polio}, where the MGWI process shows a better agreement between the observed and predicted values.
\begin{table}[!h]
\centering
\small
\renewcommand{\arraystretch}{1.2}
\caption{\small CLS estimates, standard errors, and SSPE values of the fitted MGWI and classic GWI/INAR model for the monthly cases of polio.}\label{T:results_polio}
  \begin{tabular}{cccccccccc}
  \toprule
  Models && Parameters && Estimates && Stand. Errors && SSPE\\
  \midrule
  \multirow{ 2}{*}{MGWI} && $\mu$ && 1.3585 && 0.2047 && \multirow{ 2}{*}{522.8987} \\
  	   && $\alpha$ && 2.6514 && 1.2230 &&  \\
  \midrule	   
   \multirow{ 2}{*}{GWI/INAR} && $\mu$ && 1.3572 && 0.1627 && \multirow{ 2}{*}{530.6749}\\
        &&  $\alpha$ && 0.3063 && 0.0772 && \\
  \bottomrule      
  \end{tabular}
\end{table}

\begin{figure}[!h]
	\centering		
	\includegraphics[scale=0.55]{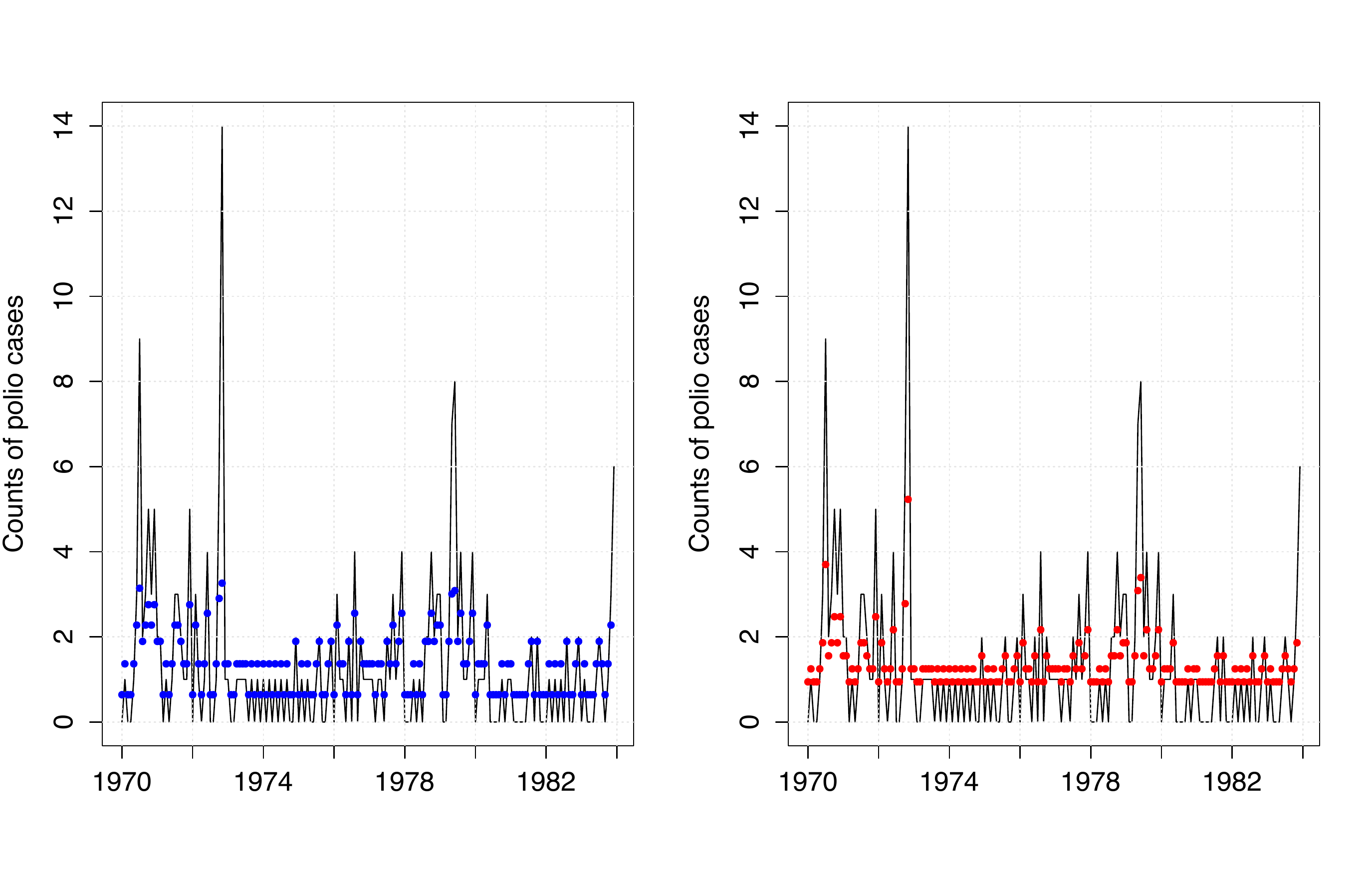}
	\vspace{-5mm}\caption{Plots of polio data (solid lines) and fitted conditional means (dots) based on the MGWI process (to the left) and classic GWI/INAR process (to the right).} \label{F:app_polio}
\end{figure}

To evaluate the adequacy of our proposed MGWI process, we consider the Pearson residuals defined by $R_t \equiv (X_t - \hat\mu_t) / \hat\sigma_t$, where $\hat\sigma_t = \sqrt{\widehat{\mbox{Var}}(X_t|X_{t-1})}$, for $t = 2, \ldots, n$, where we assume that the conditional variance takes the form given in Proposition \ref{1_lag}. Figure \ref{F:residuals_polio} presents the Pearson residuals against the time, its ACF, and the qq-plot against the normal quantiles. These plots show that the data correlation was well-captured. On the other hand, the qq-plot suggests that the Pearson residuals are not normally distributed. Actually, this discrepancy is not unusual especially when dealing with low counts; for instance, see \cite{zhu2011} and \cite{silbar2019}. As an alternative way to check the adequacy, we use the normal pseudo-residuals introduced by \cite{dunsmy1996}, which is defined by $R^\ast_t = \Phi^{-1}(U_t)$, where $\Phi(\cdot)$ is the standard normal distribution function and $U_t$ is uniformly distributed on the interval $(F_{\boldsymbol{\hat\theta}}(x_t-1), F_{\boldsymbol{\hat\theta}}(x_t))$, where $F_{\boldsymbol{\hat\theta}}(\cdot)$ is the fitted predictive cumulative distribution function of the MGWI process. Figure~\ref{F:residuals_polio2} shows the pseudo residuals against the time, its ACF, and qq-plot. We can observe that the pseudo-residuals are not correlated and are approximately normally distributed. Therefore, we conclude that the MGWI process provides an adequate fit to the polio count time series data.

\begin{figure}[!h]
	\centering		
	\includegraphics[scale=0.5]{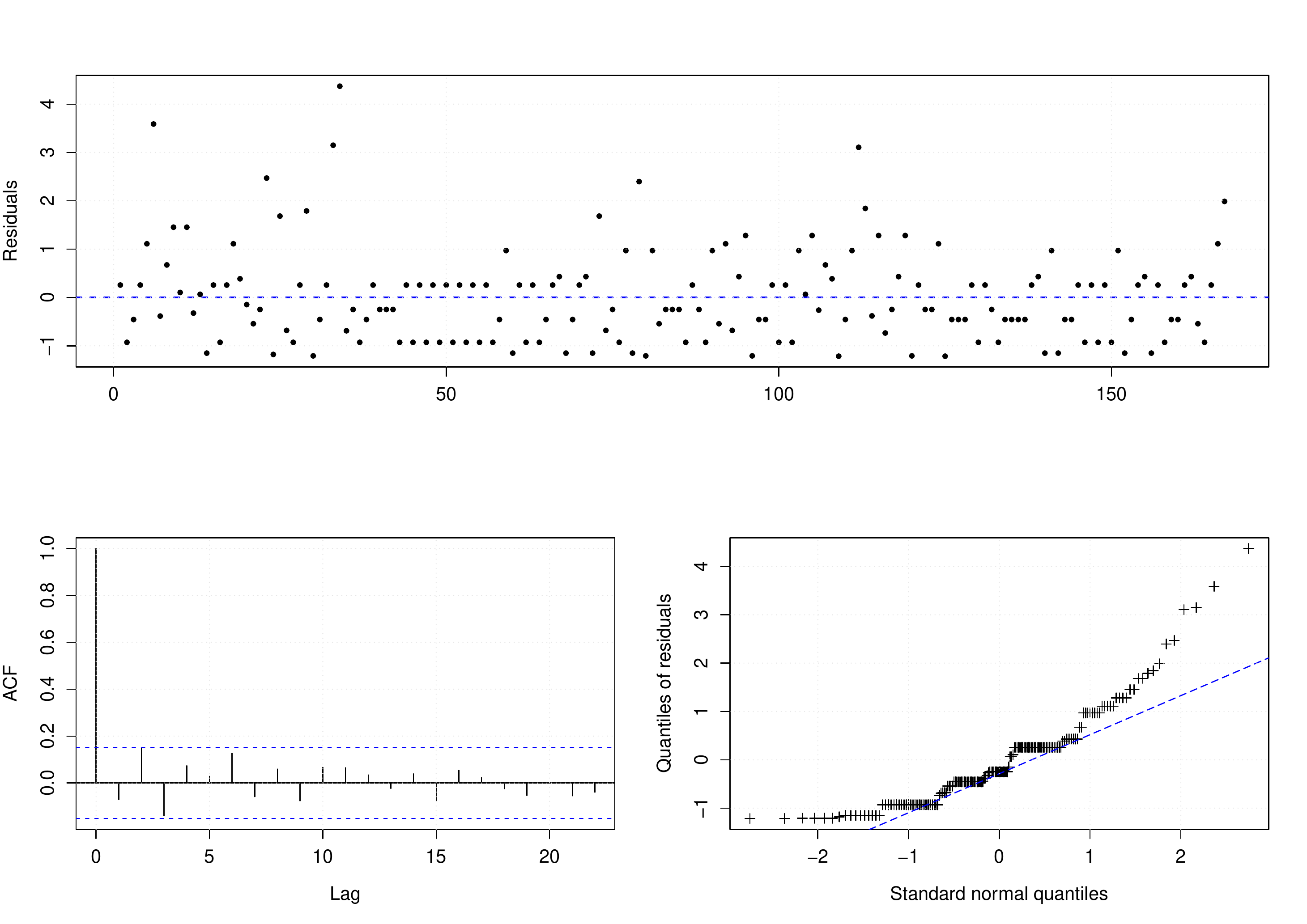}
	\caption{Pearson residuals for the MGWI process fitted to the polio data: residuals against time (top panel), ACF (bottom left panel) and qq-plot (bottom right panel).
} \label{F:residuals_polio}
\end{figure}

\begin{figure}[!h]
	\centering		
	\includegraphics[scale=0.5]{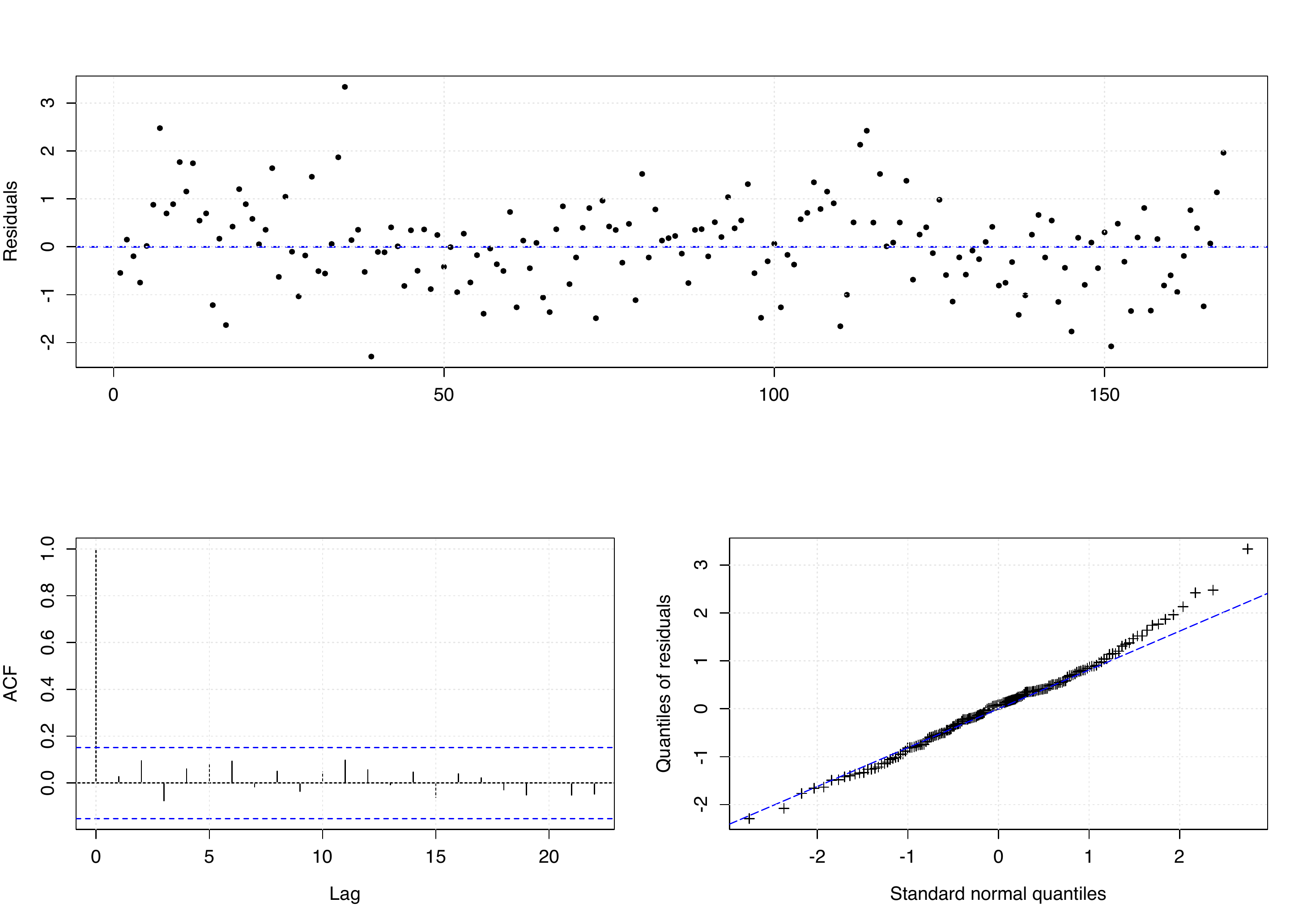}
	\caption{Pseudo-residuals for the MGWI process fitted to the polio data: residuals against time (top panel), ACF (bottom left panel) and qq-plot (bottom right panel).
} \label{F:residuals_polio2}
\end{figure}

\subsection{Hansen's disease data analysis}

We now analyze Hansen's disease data. A descriptive data analysis is provided in Table~\ref{T:desc_hans}. Figure~\ref{F:hans_data} presents the Hansen's count data and its corresponding sample ACF and PACF plots. This figure provides evidence that the count time series is non-stationary. In particular, we can observe a negative trend. This motivates us to use non-stationarity approaches to handle this data. We consider our non-stationary MGWI process with conditional mean 
\begin{eqnarray}\label{condmean_nonstat}
E(X_t|X_{t-1})=\alpha_t\left[1-\left(\dfrac{\alpha_t}{1+\alpha_t}\right)^{X_{t-1}}\right]+\dfrac{\mu_t(1+\mu_t)}{1+\mu_t+\alpha_t},
\end{eqnarray}
where the following regression structure is assumed:
\begin{align*}
\mu_t = \exp \left(\beta_0 + \beta_1t/252 \right) \quad \text{and} \quad \alpha_t = \exp \left(\gamma_0 + \gamma_1t/252 \right), \,\,\, \text{for}\,\,\, t = 1, \ldots, 252,
\end{align*}
with the term $t/252$ being a linear trend. For comparison purposes, we also consider the Poisson INAR(1) process allowing for covariates \citep{bra1995} with conditional expectation $E(X_t|X_{t-1})=\alpha_tX_{t-1}+\mu_t(1-\alpha_t)$, where 
\begin{align*}
\mu_t = \exp \left(\beta_0 + \beta_1t/252 \right) \quad \text{and} \quad \alpha_t = \dfrac{\exp \left(\gamma_0 + \gamma_1t/252 \right)}{1+\exp \left(\gamma_0 + \gamma_1t/252 \right)}, \,\,\, \text{for}\,\,\, t = 1, \ldots, 252.
\end{align*}

\begin{table}[!h]
\centering
\small
\caption{Descriptive statistics of the Hansen's disease data.} \label{T:desc_hans}
 \begin{tabular}{cccccccccccccc}
 \toprule
 Minimum && Maximum && Mean && Median && Variance && Skewness && Kurtosis\\
 \midrule
 5 && 142 && 66.63 && 66 && 481.103 && 0.250 && 3.937\\
 \bottomrule
 \end{tabular}
\end{table}

\begin{figure}[!h]
	\centering		
	\includegraphics[scale=0.5]{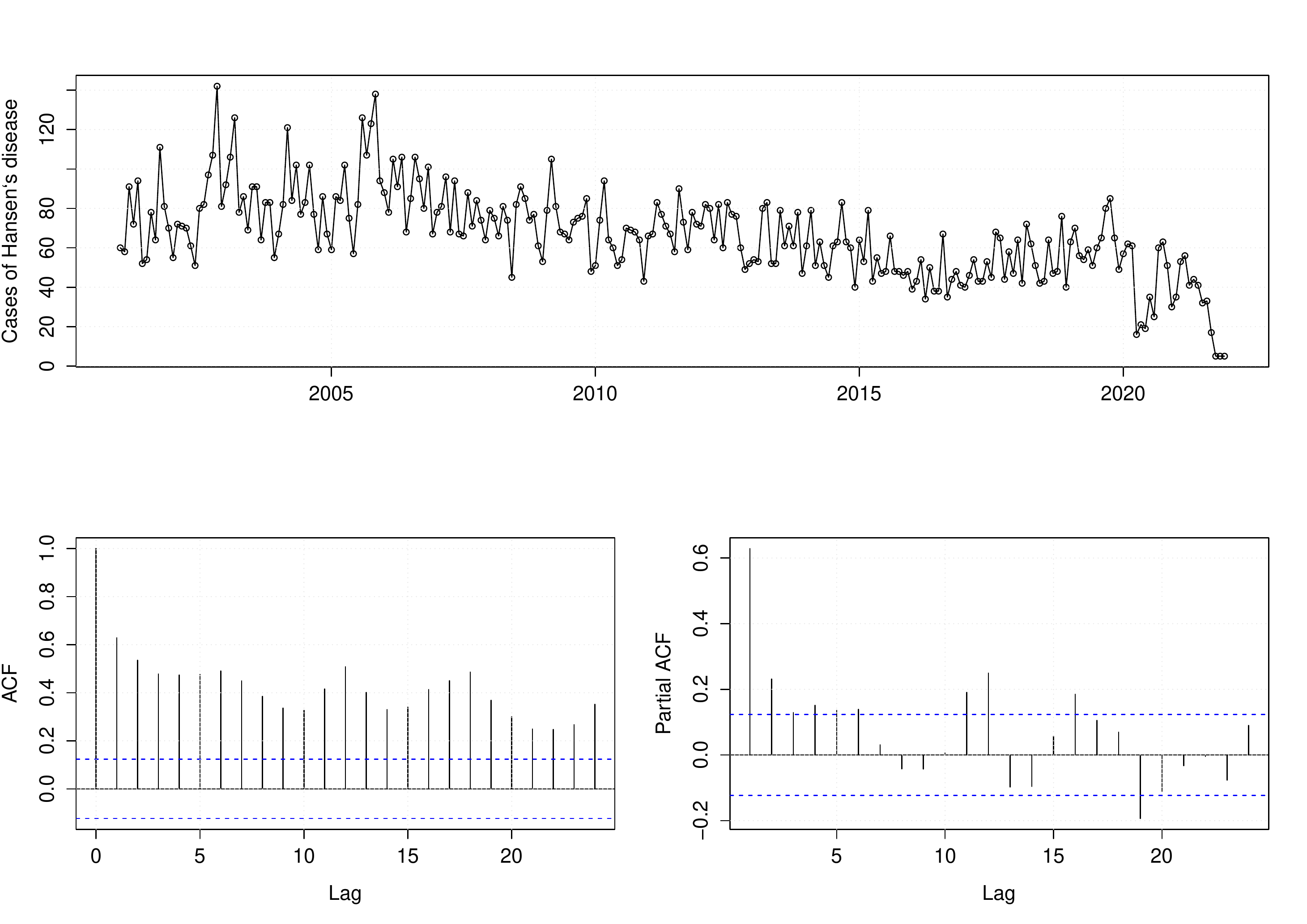}
	\caption{Hansen's disease data (top panel) and corresponding autocorrelation function (bottom left panel) and partial autocorrelation function (bottom right panel).} \label{F:hans_data}
\end{figure}

We consider the CLS estimation procedure for both approaches considered here. Table~\ref{T:results_hans} gives us the parameter estimates under the MGWI and PINAR(1) processes, standard errors obtained via bootstrap, and the SSPE values. To get the standard errors for the parameter estimates, we proceed similarly as done in the first application with a slight difference. Since here the counts are high, the geometric assumption cannot be valid. Therefore, we consider a non-stationary MGWI process with innovations following a Poisson distribution with mean $\dfrac{\mu_t(1+\mu_t)}{1+\mu_t+\alpha_t}$ in our bootstrap scheme. This ensures that the conditional mean is the same as in (\ref{condmean_nonstat}). From Table \ref{T:results_hans}, we have that the trend is significant (using, for example, a significance level at 5\%) to explain the marginal mean $\mu_t$, but not for the parameter $\alpha_t$, under the MGWI model. Furthermore, we note that the sign of the estimate of $\beta$ is negative, which is in agreement with the observed negative trend. We highlight that the parameter $\mu_t$ also appears in the autocorrelation structure under our approach, therefore the trend is also significant to explain the autocorrelation of the MGWI process. By looking at the results from the PINAR fitting, we see that the trend is significant to explain $\alpha_t$ (parameter related to the autocorrelation) but not the marginal mean $\mu_t$. Once again, we have that the model producing the smallest SSPE is the MGWI process. So, our proposed methodology is performing better than the classic PINAR model in terms of prediction. The predictive values according to both models along with the observed counts are exhibited in Figure~\ref{F:app_hans}.
 
\begin{table}[!h]
\centering
\small
\renewcommand{\arraystretch}{1.2}
\caption{\small CLS estimates, standard errors, and SSPE values of the fitted non-stationary MGWI and classic PINAR processes for the Hansen's disease data.}\label{T:results_hans}
  \begin{tabular}{ccccrrrrrr}
  \toprule
  Models && Parameters && Estimates && Stand. Errors && SSPE\\
  \midrule
  \multirow{ 4}{*}{MGWI} && $\beta_0$ && 4.3538 && 0.0671 && \multirow{ 4}{*}{58742.31} \\
  	   && $\beta_1$ && $-0.7243$ && 0.1178 &&  \\
  	   && $\gamma_0$ && 4.5297 && 0.6303 &&  \\
  	   && $\gamma_1$ && $-0.5613$ && 0.9399 &&  \\
  \midrule	   
  \multirow{ 4}{*}{PINAR} && $\beta_0$ && $-0.7668$ && 0.5332 && \multirow{ 4}{*}{59919.40} \\
  	   && $\beta_1$ && 0.7997 && 0.8876 &&  \\
  	   && $\gamma_0$ && 4.5290 && 0.0212 &&  \\
  	   && $\gamma_1$ && $-0.6883$ && 0.0433 &&  \\
  \bottomrule      
  \end{tabular}
\end{table}

\begin{figure}[!h]
	\centering		
	\includegraphics[scale=0.55]{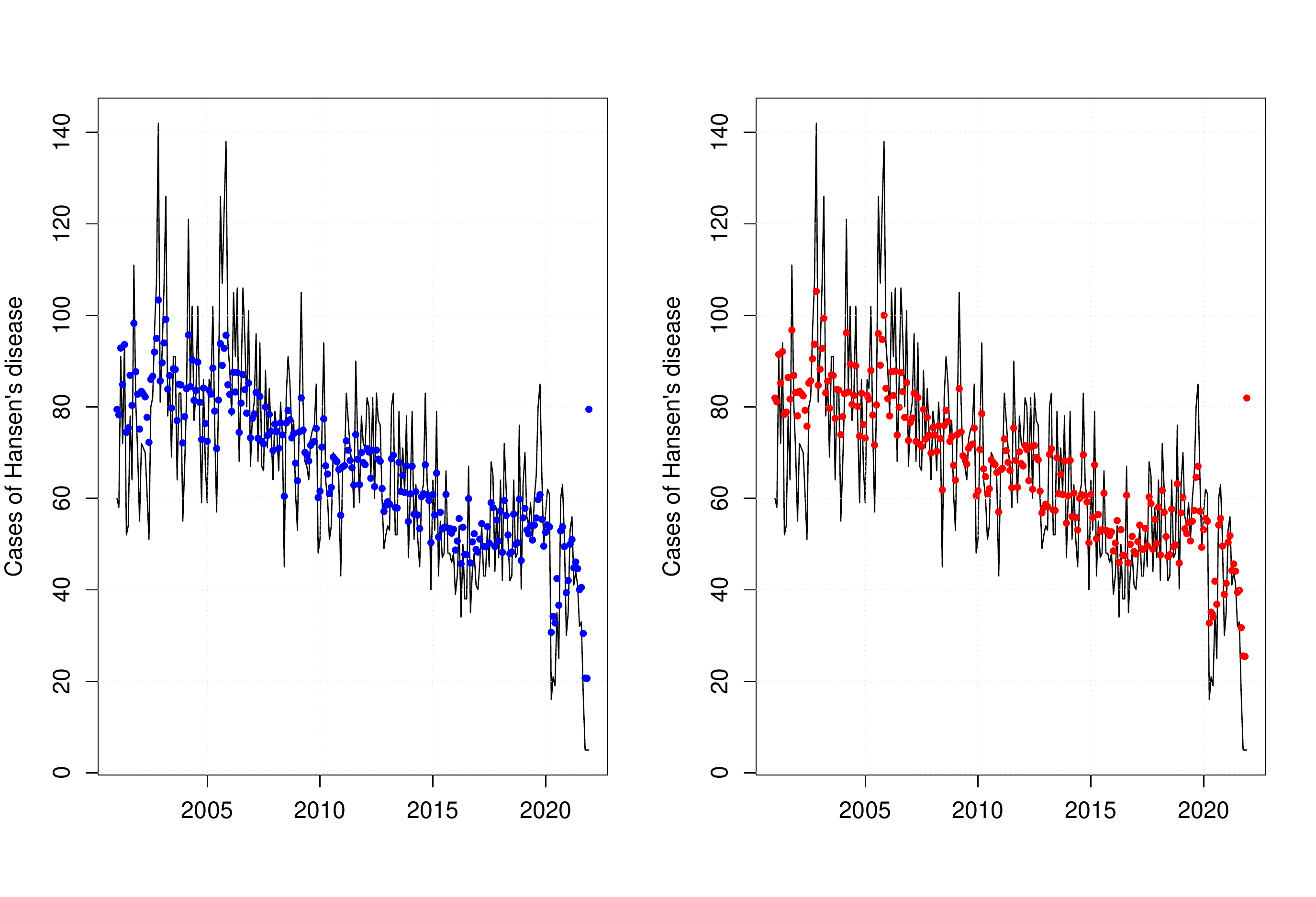}
	\vspace{-5mm}\caption{Plots of Hansen's disease data (solid line) and fitted conditional means (dots) based on the non-stationary MGWI process (to the left) and PINAR process (to the right).} \label{F:app_hans}
\end{figure}

We now conclude this data analysis by checking if the non-stationary MGWI process fits well the data. Figure~\ref{F:residuals_hans2} provides the Pearson residuals against time, its ACF plot, and the qq-plot of the residuals. By looking at this figure, we have evidence of the adequacy of the MGWI process to fit Hansen's disease data. 

\begin{figure}[!h]
	\centering		
	\includegraphics[scale=0.5]{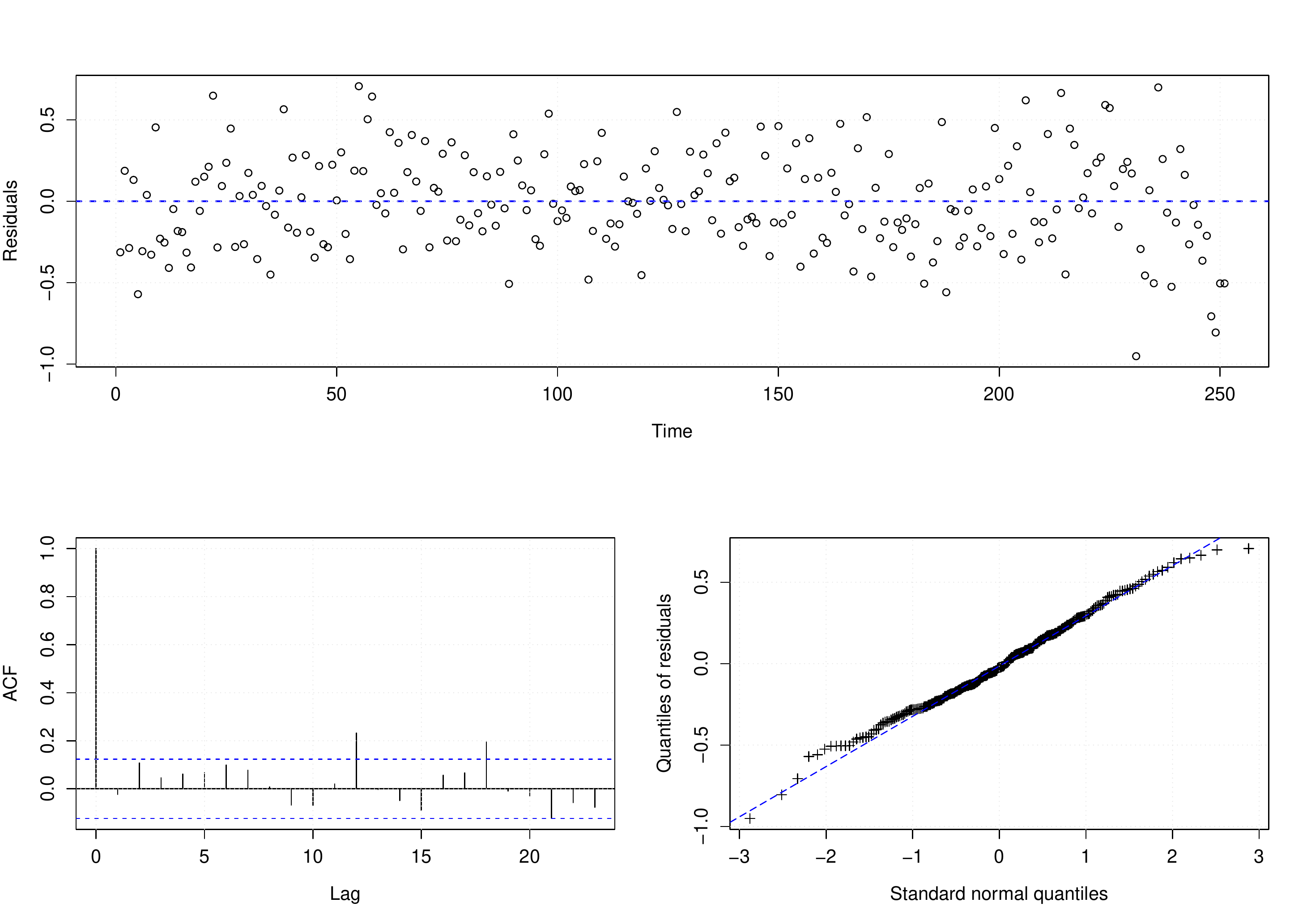}
	\caption{Pearson residuals for the MGWI process fitted to the Hansen's disease data: residuals against time (top panel), ACF (bottom left panel) and qq-plot(bottom right panel).
} \label{F:residuals_hans2}
\end{figure}

\section{Generalization}\label{sec:generalization}

In this section, we provide an extension of the geometric thinning operator and  propose a modified GWI process based on such generalization. As we will see, alternative distributions rather than geometric for the operation in (\ref{minop}) can provide flexible approaches for dealing with different features on count time series. We also discuss how to handle zero-inflation or zero-deflation with respect to the geometric model.

\begin{definition}(Zero-modified geometric (ZMG) thinning operator) Assume that $X$ is a non-negative integer-valued random variable, independent of $Z^{(\eta,\alpha)}\sim  \mathrm{ZMG}(1-\eta,\alpha)$, with $\alpha>0$ and $1-\eta\in(-1/\alpha,1)$. We define the zero-modified geometric thinning operator $(\eta,\alpha) \minop$ by
	\begin{equation}\label{zmgop}
	(\eta,\alpha) \minop X \stackrel{d}{=} \min\left(X, Z^{(\eta,\alpha)}\right).
	\end{equation}
\end{definition}

\begin{remark}
	Note that the ZMG operator given in (\ref{zmgop}) has the geometric thinning operator as a special case when $\eta=1$ since $Z^{(1,\alpha)}\sim\mathrm{Geo}(\alpha)$. Further, we stress that the parameterization of the ZMG distribution in terms of $1-\eta$ instead of $\eta$ will be convenient in what follows. Also, we will omit the dependence of $Z$ on $(\eta,\alpha)$ to simplify the notation.
\end{remark}

Based on the ZMG operator, we can define a modified GWI process $\{X_t\}_{t\in\mathbb{N}}$ (similarly as done in Section \ref{sec:gwi_process}) by
	\begin{equation}\label{mgwi2}
	X_t = (\eta,\alpha) \minop X_{t-1} + \epsilon_t,\quad t\in\mathbb N,
	\end{equation}
where $(\eta,\alpha) \minop X_{t-1}=\min\left(X_{t-1},Z_t\right)$, with $\{Z_t\}_{t\in\mathbb{N}}\stackrel{iid}{\sim}\mbox{ZMG}(1-\eta,\alpha)$, $\{\epsilon_t\}_{t\geq1}$ is a sequence of iid non-negative integer-valued random variables, called innovations, with $\epsilon_t$ independent of $X_{t-l}$ and $Z_{t-l+1}$, for all $l\geq 1$, with $X_0$ being some starting value/random variable. This is basically the same idea as before; we are just replacing the geometric assumption by the zero-modified geometric law in the thinning operation.

We now show that it is possible to construct a stationary Markov chain satisfying (\ref{mgwi2}) and having marginals ZMG-distributed; this could be seen as an alternative model to the zero-modified geometric INAR(1) process proposed by \cite{bar2015}. Furthermore, we argue that such construction is not possible under the geometric thinning operator defined in Section \ref{sec:novel_operator} (see Remark \ref{margin} below), which motivates the ZMG thinning introduced here.

Let $X{\sim}\mbox{ZMG}(1-\pi,\mu)$ with $\mu>0$ and $1-\pi\in(-1/\mu,1)$. For $z=0,1,\dots$, it holds that 
\begin{align*}
\p((\eta,\alpha) \minop X > z) &= \p(X > z)\p(Z^{(\eta,\alpha)} > z) = \pi\eta\left[\left(\frac{\mu}{1+\mu}\right)\left(\frac{\alpha}{1+\alpha}\right)\right]^{z+1}.
\end{align*}
In other words, $(\eta,\alpha) \minop X\sim\mathrm{ZMG}\left(1-\eta\pi,\frac{\mu\alpha}{1+\mu+\alpha}\right)$.
Writing $\Psi_\epsilon(s)\equiv \dfrac{\Psi_X(s)}{\Psi_{(\eta,\alpha) \minop X}(s)}$, we obtain
\begin{eqnarray}\label{conv_eps}
\Psi_\epsilon(s)&=&\left\{\dfrac{1+(1-\pi)\mu(1-s)}{1+\mu(1-s)}\right\}\Bigg/\left\{\dfrac{1+(1-\pi\eta)\frac{\mu\alpha}{1+\mu+\alpha}(1-s)}{1+\frac{\mu\alpha}{1+\mu+\alpha}(1-s)}\right\}\nonumber\\
&=&\left\{\dfrac{1+(1-\pi)\mu(1-s)}{1+(1-\pi\eta)\frac{\mu\alpha}{1+\mu+\alpha}(1-s)}\right\}\left\{\dfrac{1+\frac{\mu\alpha}{1+\mu+\alpha}(1-s)}{1+\mu(1-s)}\right\}\equiv\varphi_1(s)\varphi_2(s),
\end{eqnarray}
for all $s$ such that $|s|<1+\min(\mu^{-1},\alpha^{-1})$, where $\varphi_2(\cdot)$ denotes the pgf of a $\mathrm{ZMG}\left(\frac{\alpha}{1+\mu+\alpha},\mu\right)$ distribution. In addition to the restrictions on $\pi$ and $\eta$ above, assume that $\pi\eta<1$, $\eta\neq1$, and $\frac{1-\pi}{1-\pi\eta}\left(1+\frac{1+\mu}{\alpha}\right)<1$. Under these conditions, $\varphi_1(\cdot)$ is the pgf of a $\mathrm{ZMG}\left(\frac{1-\pi}{1-\pi\eta}\left(1+\frac{1+\mu}{\alpha}\right),(1-\pi\eta)\frac{\mu\alpha}{1+\mu+\alpha}\right)$ distribution. This implies that $\Psi_\epsilon(\cdot)$ is a proper pgf associated to a convolution between two independent ZMG random variables. Hence, we are able to introduce a MGWI process with ZMG marginals as follows. 

\begin{definition}
A stationary MGWI process $\{X_t\}_{t\in\mathbb{N}}$ with $\mbox{ZMG}(1-\pi,\mu)$ marginals (ZMG-MGWI) is defined by assuming that (\ref{mgwi2}) holds with $\{\epsilon_t\}_{t\geq1}$ being an iid sequence of random variables with pgf given by (\ref{conv_eps}), and $X_0{\sim}\mbox{ZMG}(1-\pi,\mu)$, with $\mu>0$ and $1-\pi\in(-1/\mu,1)$.
\end{definition}

\begin{remark}\label{margin}
Note that we are excluding the case $\eta=1$ (which corresponds to the geometric thinning operator) since the required inequality $\frac{1-\pi}{1-\pi\eta}\left(1+\frac{1+\mu}{\alpha}\right)<1$ does not hold in this case ($1+\frac{1+\mu}{\alpha}>1$). This shows that an MGWI process with ZMG marginals cannot be constructed based on the geometric thinning operator defined previously and therefore motivates the ZMG operator.
\end{remark}

\section*{Acknowledgments}

W. Barreto-Souza would like to acknowledge support from KAUST Research Fund. Roger Silva was partially supported by FAPEMIG, grant APQ-00774-21.

\end{document}